\newcommand{\mc}[1]{\mathcal{#1}}
\newcommand{\nn}[0]{\nonumber}
\newcommand\norm[1]{\left\lVert#1\right\rVert}
\newcommand{\BEAS}{\begin{eqnarray*}}
\newcommand{\EEAS}{\end{eqnarray*}}
\newcommand{\BEQ}{\begin{equation}}
\newcommand{\EEQ}{\end{equation}}
\newcommand{\BIT}{\begin{itemize}}
\newcommand{\EIT}{\end{itemize}}
\newtheorem{theorem}{Theorem}
\newtheorem{corollary}{Corollary}[theorem]
\newtheorem{lemma}{Lemma} 
\newtheorem{definition}{Definition}[section]
\newtheorem{proposition}{Proposition}
\newtheorem{RK}{Remark}[section]
\newtheorem{assumption}{Assumption}[section]
\tikzset{circle split part fill/.style  args={#1,#2}{%
 alias=tmp@name, 
  postaction={%
    insert path={
     \pgfextra{%
     \pgfpointdiff{\pgfpointanchor{\pgf@node@name}{center}}%
                  {\pgfpointanchor{\pgf@node@name}{east}}%
     \pgfmathsetmacro\insiderad{\pgf@x}
      \fill[#1] (\pgf@node@name.base) ([xshift=-\pgflinewidth]\pgf@node@name.east) arc
                          (0:180:\insiderad-\pgflinewidth)--cycle;
      \fill[#2] (\pgf@node@name.base) ([xshift=\pgflinewidth]\pgf@node@name.west)  arc
                           (180:360:\insiderad-\pgflinewidth)--cycle;            
         }}}}}
\newcolumntype{"}{@{\hskip\tabcolsep\vrule width 1pt\hskip\tabcolsep}}
\newcommand{\thickhline}{%
    \noalign {\ifnum 0=`}\fi \hrule height 1pt
    \futurelet \reserved@a \@xhline
}
\definecolor{darkgreen}{RGB}{0,153,0}
\definecolor{purple}{RGB}{255,51,255}
\begin{document}
\title{A Market Mechanism for Trading Flexibility Between Interconnected Electricity Markets}
\author{Hossein Khazaei,
        Ceyhun Eksin, Roohallah Khatami and Alfredo Garcia
}
\maketitle
\thispagestyle{plain}

\begin{abstract}
Electricity markets differ in their ability to meet power imbalances in short notice in a controlled fashion. Relatively flexible markets have the ability to ramp up (or down) power flows across interties without compromising their ability to reliably meet internal demand. In this paper, a market mechanism to enable flexibility trading amongst market operators is introduced. In the proposed market mechanism, market operators exchange information regarding optimal terms of trade (nodal prices and flows) along interconnection lines at every trading round. Equipped with this information, each market operator then independently solves its own internal chance-constrained economic dispatch problem and broadcasts the updated optimal terms of trade for flows across markets. We show the proposed decentralized market mechanism for flexibility trading converges to a Nash equilibrium of the intraday market coupling game, i.e. a combination of internal market clearing solutions (one for each participating market) and flows and prices along interconnection lines so that no individual market operator has an incentive to modify its own internal solution and/or the terms of trade along interties. For a specific class of chance constraints, we show that the limiting equilibrium outcome is efficient, i.e. it corresponds to the solution of the single market clearing problem for all participating markets. The proposed market mechanism is illustrated with an application to the three-area IEEE Reliability Test System.

\end{abstract}
%
\begin{IEEEkeywords}
Market coupling, decentralized optimization, flexibility trading, chance-constrained optimization.
\end{IEEEkeywords}
\IEEEpeerreviewmaketitle
\vspace{-6pt}
\section{Introduction}

\IEEEPARstart{T}{he} interconnection (or coupling) of independently operated electricity markets provides increased efficiency by enabling access to cheaper generation resources.
Market coupling may also bring about increased reliability by potentially equipping individual market operators with the {\em option} to call upon adjustments to the scheduled power exchanges across interconnection lines (or tie-lines) in short notice. 
We refer to the procurement of such options as {\em flexibility} trading, which aligns with the definition of power system flexibility in the literature \cite{epri2016electric}, i.e., the ability of power systems to adapt to dynamic and changing conditions.
In this paper we propose a mechanism for trading flexibility between independently operated markets. 

With the increasing penetration of intermittent renewable capacity in electricity markets, the ability to adjust power consumption or generation in a controlled fashion plays a pivotal role in ensuring reliable market operation. 
Accordingly, intraday market designs have been implemented to address large imbalances between day-ahead (DA) and real-time (RT) markets \cite{BookIntegRenewElecMark}.
The goal of flexibility trading across interconnection lines is to give market operators an {\em additional} tool to address such imbalances.

The market architecture for intraday flexibility trading proposed in this paper is {\em decentralized} i.e. independent market operators retain control over the terms and execution of trades and are able to enforce market-specific standards for reliable operation.
The proposed market architecture is also {\em iterative}, i.e. 
after DA dispatch is made available from the pertinent market, at each iteration of the intraday market, each market operator solves a \emph{chance-constrained} model for the optimal procurement of reserves given the prices for flexibility quoted {\em internally} (e.g. by generators) and {\em externally} by adjacent areas. 

\vspace{-6pt}
\subsection{Literature Review} \label{sec::LitReview}

The European Price Coupling of Regions (PCR) project is perhaps the largest example of a market coupling mechanism currently in operation. In such mechanism, on a day-ahead basis market participants submit their orders to their respective power exchange (PX). These orders are collected and submitted to a centralized market-clearing algorithm ({\em Euphemia}) which identifies the prices and the trades that should be executed so as to maximize the gains from trade generated by the executed orders while ensuring the available transfer capacity is not exceeded \cite{ENTSO-E}. The market mechanism proposed in this paper fundamentally differs from the PCR design in that it is {\em decentralized} and accounts for resource uncertainty.

In the United States, a different approach referred to as coordinated transaction scheduling (CTS) has been approved by FERC and implemented in NYISO, PJM, and MISO. In this method, proxy buses at the interties are used as trading points for enabling bids from external market participants. Thus offers to sell (buy) from external market participants are represented by injections (withdrawals) at such proxy buses \cite{Guo}. However, 
CTS is limited to scheduling interchanges between two neighboring areas. This clearly limits efficiency gains as areas may need to interact with more than one neighboring area simultaneously in order to identify maximum gains from trade.
Unlike CTS, the market mechanism proposed in this paper allows any finite number of market operators to trade flexibility across interties.

A large body of research works have been dedicated to evaluating power systems flexibility, e.g., through
probabilistic metrics in  \cite{lannoye2012evaluation,lannoye2014transmission} and statistical analysis in \cite{huber2014integration}, which define the system flexibility respectively based on insufficient ramping resource expectation and the frequency of ramping events of net-demand (demand minus renewable generation).
Aiming at fusing and unifying different metrics, a comprehensive flexibility metric is also presented in \cite{zhao2015unified}.
Another line of research efforts focus on scheduling flexibility resources by deploying different optimization tools, e.g., stochastic \cite{wu2014thermal,khatami2019flexibility} and chance-constrained \cite{zhang2017chance} optimization.
However, to the best of our knowledge, a decentralized market mechanism for trading flexibility between interconnected electricity markets has not yet been developed.

Considering each market as an independent entity with an objective and local constraints that are coupled with other markets due to the existence of interconnection lines leads to viewing the market coupling problem as a non-cooperative game with coupled constraints. Thus, the optimal solution from the viewpoint of a market is a Nash equilibrium strategy profile that satisfies the local constraints. In this sense, our proposed mechanism is relevant to Nash equilibrium seeking algorithms for non-cooperative games with coupled constraints \cite{fischer2014generalized}. Distributed Nash equilibrium seeking algorithms are developed for convex \cite{pavel2019distributed}, network  \cite{yi2017distributed}, and aggregative games with affined coupled constraints \cite{parise2015network,paccagnan2016distributed} using best-response, or primal-dual type updates.  

The proposed mechanism in this paper is also related to the literature on decentralized optimization algorithms used for solving economic dispatch problems \cite{Binetti14,Liu18}---see \cite{Kargarian18} for a review on the topic. Existing techniques for solving decentralized multi-area economic dispatch problems can be classified into primal decomposition methods and dual decomposition methods.
A ``marginal equivalent" primal decomposition algorithm is proposed in \cite{Litvinov14} for solving the multi-area optimal power flow problem where the information of the marginal units and the binding constraints are exchanged among the local system operators. A generalized Bender decomposition method is used in \cite{LiWu16} in order to decentralize the economic dispatch problem. Finally, in \cite{TongGuo17} the authors propose a coordinated economic dispatch model based on primal decomposition approach where the boundary bus voltage phase angles are the coupling variables.  
In dual decomposition methods, the original problem is converted into separable sub-problems and coupling constraints are dualized (see, e.g.  \cite{Bakirtzis03}, \cite{conejo}). In \cite{erseghe} the alternating direction multiplier method (ADMM) is used to derive a fully distributed and robust algorithm for solving the optimal power flow. In \cite{LuLiu18} distributed interior point method is used to build a fully decentralized optimal power flow algorithm, and by using the unidirectional ring communication graph for information exchange, the upward and downward communication among areas are eliminated.


\subsection{Scope and Contributions}
In this paper our goal is not to propose a new method for Nash equilibrium seeking or for solving the multi-area economic dispatch problem in a decentralized fashion. Rather, we propose a decentralized market mechanism, i.e. a set of rules and protocols governing the physical and economic interaction between market operators in order to enable trading of flexibility services.
The market mechanism proposed in this paper fundamentally differs from the current market coupling design implemented in Europe (PCR) design in that it is {\em decentralized}. In addition, despite the coupling mechanism used in US (CTS) which is limited to two markets, the mechanism introduced in this paper allows {\em any finite number} of market operators to trade flexibility across interties.
Further, the present work is different from our earlier work in \cite{garcia2021incentive} as it accounts for the net-demand uncertainty and enables trading for flexibility, rather than presuming a deterministic net-demand and trading solely for the energy. 

In the proposed mechanism, individual market operators 
independently revisit their preferred terms of trade for flexibility (price and quantity from adjacent markets) at each iteration. 
Individual market preferences over cost {\em vs.} reliability are modeled by means of a chance-constrained formulation of individual market clearing. 
That is, each individual market operator solves a {\em local} chance-constrained economic dispatch problem at each iteration to identify the desired terms of trade with neighboring markets.
The impact of capacity and ramping limitations of local generation resources are reflected on the terms of trade through internalizing the pertinent constraints within individual market clearing processes. In Theorem 1, we show that the proposed market mechanism enables the decentralized identification of a Nash equilibrium of the intraday market coupling game, i.e. a combination of internal market clearing solutions (one for each participating market) and flows and prices along interconnection lines so that no individual market operator has an incentive to modify its own internal solution and/or the terms of trade along interties. We show that Theorem 1 holds on a more general setting with any market clearing model that embeds a convex optimization problem. In Theorem 2, we show that when market operators use a chance-constrained formulation for intraday market clearing, trading outcomes between areas converge to the optimal centralized market clearing solution for all participating markets.


\label{1_Intro}



\section{A Chance-constrained Market Mechanism for Trading Flexibility in Intraday Markets}
\label{2_PropDecentModel}
%
%
%
%

Market operators run different markets before delivering electricity to the customers. These markets include DA market, intraday market, and balancing or RT market. A market operator uses each market to achieve certain set of goals. The purpose of DA market is to schedule and commit slow-ramping energy resources to meet expected demand, while the intraday market is to give the ISO enough flexibility to adjust scheduled dispatch with respect to changes in forecasted demand and/or supply. Finally, the balancing or real-time market is used to ensure that demand and supply are in balance at the time of delivery. In this section, we propose a coupling market mechanism that extends the intraday markets to include flexibility trading with neighboring power grids.
In this section, the proposed market clearing model is developed for a special case where each regional market operator uses a chance-constrained formulation with possibly different reliability requirements. 
We generalize the results of proposed approach for modeling the market clearing by independent market operators as the solution of a {\em black-box} convex optimization problem (Section \ref{sec:BlackBox}).


\subsection{Chance-constrained Market Clearing} 

Let $\mathcal{A}=\{1,\dots,A\}$ denote the index set of different regions and the associated markets. 
Each area $a\in\mathcal A$ is represented by a directed graph $(\mathcal{N}_a,\mathcal{H}_a)$ where $\mathcal{N}_a=\{1,2,\ldots,N_a\}$ and $\mathcal{H}_a=\{(i,j)|i,j \in \mathcal{N}_a,j\equiv j(i)\}$ respectively represent the set of nodes (buses) and edges (transmission lines).
Also, let $\mc{G}_a=\{1,2,\ldots,G_a\}$ denote the set of generators in region $a$, and $\mc{T}_a=\{(i,j)|i\in \mathcal{N}_a, j\in \mathcal{N}_{a'}, a'\in \mathcal A\backslash a\}$ denote the set of tie-lines connected to region $a$.
Intraday clearing for market $a \in \mathcal{A}$ aims to identify adjustments to power generation of all generators $g\in\mathcal{G}_a$ and interconnection flows of all tie-lines $(i,j) \in \mathcal{T}_a$ so as to minimize a given objective $V_a$ while meeting reliability constraints.

We assume each regional market operator $a$ uses the following market clearing model:
\begin{align}
     &\min_{\bm{\Delta} \mathbf{P}_a,\bm{\Delta} \mathbf{T}_a, \bm{\theta}_{a}} V_a = \sum_{g \in \mc{G}_a} C_{a,g} (P^\mathrm{DA}_{a,g} + \Delta P_{a,g})\nn
    \\  &+  \sum_{(i,j) \in \mc{T}_a \cap \mc{T}_{a'}}\hspace{-10pt} \big\{ 
    - \delta_{a',(i,j)}  \Delta T_{a,(i,j)}  + \frac{\mu_{(i,j)}}{2}  |\Delta T_{a,(i,j)}| \big\}, \label{eq::decen_obj_fun} \\
    & \mbox{s.t.}    \nn \\
     &\mathbf{M}_{a} (\mathbf{P}^\mathrm{DA}_{a} + \bm{\Delta} \mathbf{P}_{a})-\mathbf{B}_{a} \bm{\theta}_{a} - \mathbf{R}_{a} (\mathbf{T}^\mathrm{DA}_{a} \!+\! \mathbf{\Delta T}_a) \succeq \mathbf{D}_{a}, \nn\\
     &\hspace{200pt}   
      (\bm{\alpha}_{a}), \label{eq::decen_load_bal_0} \\
     &\underline{\mathbf{P}}_a \leq \mathbf{P}^\mathrm{DA}_a + \bm{\Delta} \mathbf{P}_a \leq \overline{\mathbf{P}}_a, \hspace{75pt} (\bm\nu_{a} , \bm\lambda_{a}), \label{eq::decen_Gen_Lim}\\
    &\underline{\mathbf{R}}_a \leq  \bm{\Delta} \mathbf{P}_a \leq \overline{\mathbf{R}}_a, \hspace{103pt} (\bm\psi_{a} , \bm\varphi_{a}), \label{eq::decen_Ramp_Lim}\\
     &- \overline{H}_{a,(i,j)} \leq \frac{\theta_{a,i} - \theta_{a,j}}{x_{a,(i,j)}} \leq \overline{H}_{a,(i,j)}, \hspace{2pt} (\kappa_{a,(i,j)}, \eta_{a,(i,j)}), \nn \\
    & \hspace{170pt} \forall (i,j) \in \mc{H}_a,\label{eq::decen_L_flow}\\
     &\Delta T_{a,(i,j)} \!= \frac{\theta_{a,i} \!-\! \theta_{a',j}}{\bar{x}_{a,(i,j)}} \!-\! T^\mathrm{DA}_{a,(i,j)},  (\xi_{a,(i,j)}),  
     \forall (i,j) \in \mc{T}_a, \label{eq::decen_T_flow} \\
    &\Pr\big(
    \mathbf{1}^{\top}\! \mathbf{D}_a  \leq 
    \mathbf{1}^{\top}\! (\mathbf{P}^\mathrm{DA}_a \!+\! \bm{\Delta} \mathbf{P}_a) \!-\! \mathbf{1}^{\top}\! (\mathbf{T}^\mathrm{DA}_a \hspace{-1pt} \!+\! \hspace{-1pt} \bm{\Delta} \mathbf{T}_a) \big) \!\geq\!
    1 \!-\! t_a, \nn \\
    & \hspace{200pt} (\gamma_a), \label{eq::decen_load_bal_5} \\
     &\theta_{1,1}=0, \label{eq::decen_theta_1}
\end{align}
where 
 $\mathbf{P}^\mathrm{DA}_{a}=[P^\mathrm{DA}_{a,g}|_{g\in\mathcal{G}_a}]$ is the given DA scheduled power output of generators in $\mathcal{G}_a$ and $\mathbf{\Delta P}_{a}=[\Delta P_{a,g}|_{g\in\mathcal{G}_a}]$ represents the adjustment to power in the intraday market,
 $\mathbf{T}^\mathrm{DA}_{a}=[T^\mathrm{DA}_{a,(i,j)}|_{(i,j)\in \mathcal{T}_a}]$ and $\mathbf{\Delta T}_{a}=[\Delta T_{a,(i,j)}|_{(i,j)\in \mathcal{T}_a}] $ are respectively the scheduled DA and adjusted intraday power flows of tie-lines in $\mathcal{T}_a$,
the vector $\bm\theta_a$ collects the bus voltage phase angles in $\mathcal{N}_a$, the admittance matrix of area $a$ is represented by $\mathbf{B}_a$, and
the incidence matrices $\mathbf{M}_a$ and $\mathbf{R}_a$ respectively map the generators in $\mc{G}_a$ and the tie-line power flows in $\mathcal{T}_a$ to the buses in $\mathcal{N}_a$. 
The vector $\mathbf{D}_a$ denotes the {\em random} forecast for nodal net-demands in region $a$. The vectors $\underline{\mathbf{P}}_a$ and  $\overline{\mathbf{P}}_a$ respectively represent the lower bound and upper bound power limits of the generators in $\mc{G}_a$, and the vectors
$\underline{\mathbf{R}}_a$ and  $\overline{\mathbf{R}}_a$ respectively represent the associated down and up ramp rate limits.
$\overline{H}_{a,(i,j)}$ and $x_{a,(i,j)}$ are respectively the capacity and reactance of transmission lines in $\mathcal{H}_a$, and $\bar{x}_{a,(i,j)}$ is the tie-line reactance for $(i,j) \in \mathcal{T}_a$. 
The parenthesis after each constraint embed the corresponding dual variables.


The objective function in \eqref{eq::decen_obj_fun} includes the total generation cost of generators in region $a$ and the revenue/cost of exporting/importing power to/from the neighboring areas where $\mu_{(i,j)}$ is a capacity price associated to the capacity of the tie-line $(i,j)\in\mathcal{T}_a$, and the price $\delta_{a',(i,j)}$ is the willingness to pay by market $a'$ for power flow changes in the tie-line $(i,j)$ connecting markets $a$ and $a'$ (further discussions on the optimal value of $\delta_{a',(i,j)}$ is provided in Section \ref{3_Benchmark}).
The constraint \eqref{eq::decen_load_bal_0} ensures that the nodal injections are greater than the nodal net-demand forecasts, the power generation and generation ramping of units are maintained within limits respectively through \eqref{eq::decen_Gen_Lim} and \eqref{eq::decen_Ramp_Lim}, the power flows of transmission lines are calculated and constrained in \eqref{eq::decen_L_flow}, the tie-line power flows are calculated in \eqref{eq::decen_T_flow}, and the chance-constrained region power balance is presented in \eqref{eq::decen_load_bal_5}.
As the nodal net-demands in region $a$ are random variables, constraint \eqref{eq::decen_load_bal_5} requires total supply must meet total net-demand with $1-t_a$ confidence level and $t_a \in [0,1]$.
The slack bus is deemed to be the first bus of area $1$, without loss of generality, and the associated voltage phase angle is set to zero in \eqref{eq::decen_theta_1}.

With information on the tie-line capacity price $\mu_{(i,j)}$, the willingness to pay for flexibility $\delta_{a',(i,j)}$ and the angles $\theta_{a',(i,j)}$ from neighboring areas $a'$, and the internal costs of generation $C_{a,g}$, the market operator aims to meet the net-demand in the intraday market at the minimum cost possible, while respecting a reliability standard (as modeled by chance constraint \eqref{eq::decen_load_bal_5}).

\vspace{-6pt}
\subsection{Intraday Market Coupling Game}

With fixed capacity prices represented by vector $\bm{\mu}$ (per unit flow) for all tie-lines, the interaction between market operators can be modeled as a non-cooperative game with {\em coupled} constraints. The elements of such game are:
\begin{itemize}
    \item A finite set of players $\mc{A}$, i.e. the market operators.
    \item A payoff function for each player, i.e. $V_a$ the objective function for intraday clearing market $a$ in  \eqref{eq::decen_obj_fun}.
    \item A set of coupled constraints \eqref{eq::decen_load_bal_0}-\eqref{eq::decen_theta_1} that define the set of feasible intraday market clearing outcomes.
    \item Willingness to pay for adjustments in interconnection flows over tie-line connecting buses $(i,j)$, i.e. $\delta_{a,(i,j)}$.
\end{itemize}
\vspace{-2pt}
\begin{definition}
Given the capacity prices $\bm{\mu}$, a Nash equilibrium of the intraday market coupling game capacity prices is a strategy profile 
$(\bm{\Delta} \mathbf{P}_a^*, \bm{\Delta} \mathbf{T}_a^*, \bm{\theta}_{a}^*)$ for each $a \in \mathcal{A}$ satisfying constraints \eqref{eq::decen_load_bal_0}-\eqref{eq::decen_theta_1} (given $\bm{\theta}^*_{-a}$) and willingness to pay for adjustments in interconnection flows $\bm{\delta}_{a}^*$ such that:
\begin{align*}
& V_a(
   \boldsymbol{\Delta} \mathbf{P}_a^*, 
   \boldsymbol{\Delta} \mathbf{T}^{*}_a,  
   \boldsymbol{\theta}^{*}_{a};  \boldsymbol{\delta}^{*}_{-a}, \boldsymbol{\mu} ) 
    \leq  V_a(
   \boldsymbol{\Delta} \mathbf{P}_a, \boldsymbol{\Delta} \mathbf{T}_a,  \boldsymbol{\theta}_{a};  \boldsymbol{\delta}^{*}_{-a}, \boldsymbol{\mu} )
\end{align*}
for all solutions $(\bm{\Delta} \mathbf{P}_a, \bm{\Delta} \mathbf{T}_a, \bm{\theta}_{a})$ that also satisfy constraints \eqref{eq::decen_load_bal_0}-\eqref{eq::decen_theta_1} (given $\bm{\theta}^*_{-a}$).

\end{definition}

\vspace{-6pt}
\subsection{Decentralized Iterative Mechanism for Trading Flexibility}
\label{5_Converg}
In this section we propose a decentralized iterative mechanism for flexibility trading between different regions. Our main convergence result implies that the mechanism identifies a Nash equilibrium of the intraday market coupling game.

The basic steps of the mechanism are as follows--see Algorithm \ref{Algorithm} for details. At each iteration $k > 0$ of the mechanism,

\begin{enumerate} 
	\item ({\em Information exchange}) Each area $a' \in \mathcal{A}$ reports the current terms of trade: $\delta^{k-1}_{a',(i,j)}$ and angles $\theta^{k-1}_{a',j}$ for each interconnection.
	\item ({\em Updating Tie-line Flows}) Given a capacity price $\mu_{(i,j)}^{k-1}$, each area $a \in \mc{A}$ solves problem (1) to identify the market clearing solution $\widehat{\bm{\Delta}\mathbf{T}}_a^k$, and willingness to pay $\widehat{\bm{\delta}}_{a}^k$.
	After calculating $\widehat{\bm{\Delta}\mathbf{T}}_a^k$, $\widehat{\bm{\theta}}_a^k$ and  $\widehat{\bm{\delta}}_{a}^k$, the market operator updates the flows, angles and selling prices along the tie-lines according to the following:
	\begin{align}
	& \hspace{-11pt} \bm{\Delta} \mathbf{T}_a^k = (1 - \rho_k) \bm{\Delta} \mathbf{T}_a^{k-1} + \rho_k \widehat{\bm{\Delta}\mathbf{T}}_a^k \label{eq::upT} \\
	& \bm{\theta}_a^k = (1 - \rho_k) \bm{\theta}_a^{k-1} + \rho_k \widehat{\bm{\theta}}_a^k \label{eq::upThta} \\
	& \bm{\delta}_a^k = (1 - \rho_k) \bm{\delta}_a^{k-1} + \rho_k \widehat{\bm{\delta}}_{a}^k \label{eq::upDlta}
	\end{align}
	with $\rho_k \rightarrow 0^+$, $\sum_k \rho_k = +\infty$ and $\sum_k \rho_k^2 < +\infty$.
	\item ({\em Updating Tie-line Capacity Prices})
	\begin{align}
	& \hspace{-10pt} \mu_{(i,j)}^k = \max \{ \mu_{(i,j)}^{k-1} \label{eq::muUpd} \\
	& + \beta ( \frac{|\Delta T_{a,(i,j)}^k| + |\Delta T_{a',(j,i)}^k|}{2} + |T^\mathrm{DA}_{a,(i,j)}|  - \bar{T}_{a,(i,j)} ), 0  \} \nn
	\end{align}
	where $\beta \in (0,1)$ and $\mu_{(i,j)}^0 = 0$, and $\bar{T}_{a,(i,j)}$ is the maximum tie-line power capacity.
\end{enumerate}

\begin{algorithm}
	\begin{algorithmic}[1]
		\caption{Mechanism for Intraday Market Coupling} \label{Algorithm}
		\renewcommand{\algorithmicrequire}{\textbf{Require:}}
		\REQUIRE Initialize $\bm{x}_a^0 = \left( \mathbf{\Delta T}_a^0, \bm{\theta}_a^0, \bm{\delta}_a^0 \right)$ and $\mu_{(i,j)}^0$ for all areas and tie-lines.
		\REQUIRE Maximum number of iterations $T+1$. Set $k=1$.
		\WHILE {$k \leq T+1$}
		\STATE Each area solves optimal dispatch (1) to obtain $\widehat{\mathbf{x}}_a^k = \left( \widehat{\bm{\Delta}\mathbf{T}}^{k}_a,
		\widehat{\bm{\theta}}_a^k,
		\widehat{\bm{\delta}}_a^k \right)$.
		\STATE Calculate $\bm{x}_a^k = (1-\rho_k) \bm{x}_a^{k-1} + \rho_k \widehat{\bm{x}}_a^k$.
		\STATE Calculate the capacity prices $\mu_{(i,j)}^k$ using \eqref{eq::muUpd} for all tie-lines.
		\STATE Exchanges $\bm{x}_a^k$ (related to the tie-lines) between neighboring areas.
		\STATE Set $k=k+1$.
		\ENDWHILE
	\end{algorithmic} 
\end{algorithm}
%
In updating the tie-line flows, angles, and terms of trade, each area uses an inertial update in \eqref{eq::upT}-\eqref{eq::upDlta} by weighting the current optimal solution, i.e. $\widehat{\bm{\Delta}\mathbf{T}}_a^k$, $\widehat{\bm{\theta}}_a^k$ and $\widehat{\bm{\delta}}_a^k$, with the previous step, rather than using the current solution from market clearing. The capacity price is updated with a constant step size.
The convergence analysis exploits this fast update of the capacity prices versus the slow update of tie-line flows and angles, and the willingness to pay for flexibility.
%


\vspace{-6pt}
\subsection{Convergence}

We now state the standing assumptions for the market clearing model in \eqref{eq::decen_obj_fun}-\eqref{eq::decen_theta_1}:

\begin{assumption} \label{assumption}
We make the following standing assumptions:
\begin{enumerate}
    \item The cost function $C_{a,g}(\cdot)$ is strictly convex, for all $g \in \mathcal{G}_a, a \in \mathcal{A}$.
    \item Each market can meet the local demand on its own without relying on the flexibility trades on the tie-lines.
\end{enumerate}
\end{assumption}
  
\begin{RK}
Since
the optimization problem in \eqref{eq::decen_obj_fun}-\eqref{eq::decen_theta_1} is convex, the primal and dual solutions are Lipschitz continuous (see e.g. \cite{LecParamOpt}).
\end{RK}

The steps we take to prove the convergence of the proposed iterative mechanism are as follows:
\begin{itemize}
    \item First, we prove convergence of the capacity prices $\mu_{(i,j)}^k \rightarrow \mu_{(i,j)}^\infty \geq 0$ (Lemma \ref{Lemma::MuConv}).
    \item Secondly, we prove that the sequence $\bm{x}_a^k \rightarrow \bm{x}_a^{\infty}$ converges (Theorem \ref{Theorem::XConv}) and that the limit values correspond to a Nash equilibrium of the market clearing game with capacity prices $\mu_{(i,j)}^\infty$ (Theorem \ref{Theorem::XConv}).
\end{itemize}
\begin{RK}
From Assumption \ref{assumption}.2 we conclude that there exists a maximum tie-line capacity price $\bar{\mu}_{(i,j)}>0$ for $(i,j) \in \mc{T}_a$ such that for $\mu_{(i,j)}>\bar{\mu}_{(i,j)}$ the optimal solution to \eqref{eq::decen_obj_fun}-\eqref{eq::decen_theta_1} is one with no flexibility trading along the tie-lines, i.e. $\widehat{\Delta T}_{a,(i,j)} = 0, \forall (i,j) \in \mc{T}_a$.
\end{RK}
First, we show that the capacity multipliers converge and complementary slackness holds in the limit.
\begin{lemma} \label{Lemma::MuConv}
	For every tie-line $(i, j)$ it holds that $\mu_{(i,j)}^k \rightarrow \mu_{(i,j)}^{\infty}\geq 0$ and 
	\begin{align}  \label{eq::lemma}
	\mu_{(i,j)}^k ( \frac{|\Delta T_{a,(i,j)}^k| + |\Delta T_{a',(j,i)}^k|}{2} + |T^\mathrm{DA}_{a,(i,j)}| - \bar{T}_{a,(i,j)} ) \rightarrow 0
	\end{align}
\end{lemma}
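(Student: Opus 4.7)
The plan is to establish the claim in three steps: boundedness of $\{\mu_{(i,j)}^k\}_k$, convergence of that sequence, and then complementary slackness as a consequence.

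\textbf{Step 1 (Boundedness).} By Remark 2, there exists $\bar\mu_{(i,j)}>0$ such that whenever $\mu_{(i,j)}^{k-1}>\bar\mu_{(i,j)}$ the solution of (1) for either area adjacent to the tie-line satisfies $\widehat{\Delta T}_{a,(i,j)}^k=0$. Combined with the inertial update \eqref{eq::upT} and the fact that, under Assumption \ref{assumption}.2, the day-ahead flow satisfies $|T^{\mathrm{DA}}_{a,(i,j)}|<\bar T_{a,(i,j)}$, iterating this observation shows that once $\mu^k$ enters any level set above $\bar\mu_{(i,j)}$ the bracket in \eqref{eq::muUpd} is eventually negative, so $\mu_{(i,j)}^k$ cannot grow unboundedly. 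Since the primal variables $\Delta T_a^k$ are bounded by the tie-line limits, the constraint violation is uniformly bounded, and hence $\mu_{(i,j)}^k \le M$ for some constant $M$ independent of $k$.

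\textbf{Step 2 (Convergence of $\mu_{(i,j)}^k$).} This is the main technical step and relies on the two-timescale structure: $\mu^k$ evolves with constant step $\beta$ while the primal variables $\bm{x}_a^k$ evolve with vanishing step $\rho_k$ satisfying $\sum_k\rho_k=\infty$ and $\sum_k\rho_k^2<\infty$. Consider the Lyapunov candidate
\[
L_k \;=\; \frac{1}{2\beta}\,(\mu_{(i,j)}^k)^2.
\]
Using the nonexpansiveness of the projection onto $[0,\infty)$ in \eqref{eq::muUpd} and expanding the square, one obtains
\[
L_{k+1}-L_k \;\le\; \mu_{(i,j)}^k\, c_{(i,j)}^{k+1} + \tfrac{\beta}{2}(c_{(i,j)}^{k+1})^2,
\]
where $c_{(i,j)}^{k+1}$ denotes the bracket appearing in \eqref{eq::muUpd}. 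Since $\widehat{\bm{\Delta T}}_a^k$ solves the convex program (1), the KKT conditions applied to the optimizer produce a dual variable $\hat\mu^k$ for the tie-line capacity constraint that differs from $\mu^{k-1}$ only by an $O(\rho_k)$ amount (by the Lipschitz continuity of the solution map of (1) with respect to its parameters, noted in Remark 1). Substituting $\Delta T^k = (1-\rho_k)\Delta T^{k-1}+\rho_k\widehat{\Delta T}^k$ and using complementary slackness of the solved local problem, the cross term $\mu_{(i,j)}^k c_{(i,j)}^{k+1}$ can be bounded above by $-\alpha(\mu_{(i,j)}^k-\mu_{(i,j)}^{\infty})^2+O(\rho_k)$ for a suitable candidate limit $\mu_{(i,j)}^{\infty}$ and a constant $\alpha>0$. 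Telescoping the resulting inequality and invoking $\sum_k\rho_k^2<\infty$ together with the boundedness from Step 1 shows that $\{\mu_{(i,j)}^k\}$ is Cauchy, hence converges to some $\mu_{(i,j)}^\infty\ge 0$.

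\textbf{Step 3 (Complementary slackness).} From Step 2, $\mu_{(i,j)}^{k+1}-\mu_{(i,j)}^k\to 0$. If $\mu_{(i,j)}^\infty>0$, then for all sufficiently large $k$ the $\max$ in \eqref{eq::muUpd} is not binding, so $\mu_{(i,j)}^{k+1}-\mu_{(i,j)}^k=\beta\,c_{(i,j)}^{k+1}$, and therefore $c_{(i,j)}^{k+1}\to 0$, which yields \eqref{eq::lemma}. If $\mu_{(i,j)}^\infty=0$, then the product $\mu_{(i,j)}^k c_{(i,j)}^{k+1}$ vanishes trivially because $c_{(i,j)}^{k+1}$ stays bounded by Step 1, so \eqref{eq::lemma} again holds.

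The main obstacle is Step 2: the constant step $\beta$ rules out a textbook diminishing-step dual-ascent argument, so the convergence must be extracted from the separation of timescales between the fast $\mu$-update and the slow, asymptotically quasi-static primal update. Carrying this out cleanly requires using the Lipschitz dependence of the optimizers of (1) on $\mu^{k-1}$ (Remark 1) to argue that $\widehat{\Delta T}^k$ is close to the value consistent with a fixed-point relationship, and then showing that the perturbation introduced by the inertial averaging is summable in squared norm.
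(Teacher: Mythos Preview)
Your Step 3 is fine and matches what the paper (implicitly) does. The real issue is Step 2, where the Lyapunov/dual-ascent sketch does not go through as written.

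First, the KKT step you invoke is not available: the local problem \eqref{eq::decen_obj_fun}--\eqref{eq::decen_theta_1} has \emph{no} hard tie-line capacity constraint. Capacity enters only through the penalty $\tfrac{\mu_{(i,j)}}{2}|\Delta T_{a,(i,j)}|$ in the objective, so there is no dual variable $\hat\mu^k$ of the subproblem to compare with $\mu^{k-1}$, and no local complementary slackness to feed into the cross term. Second, the bound $\mu_{(i,j)}^k c_{(i,j)}^{k+1}\le -\alpha(\mu_{(i,j)}^k-\mu_{(i,j)}^\infty)^2+O(\rho_k)$ is circular: it presupposes the limit $\mu_{(i,j)}^\infty$ whose existence you are trying to establish, and no monotonicity or strong convexity in $\mu$ is available to produce such a quadratic contraction. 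Third, the residual $\tfrac{\beta}{2}(c_{(i,j)}^{k+1})^2$ has a constant prefactor $\beta$, so it is not summable unless you already know $c_{(i,j)}^{k+1}\to 0$, which is essentially the conclusion.

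The paper's argument is quite different and bypasses all of this. It proceeds by contradiction via an \emph{upcrossing} argument. If $\{\mu_{(i,j)}^k\}$ fails to converge, the bracket $c_{(i,j)}^k$ must cross levels $\epsilon/4$ and $3\epsilon/4$ infinitely often. Because the primal update is slow ($\rho_k\to 0$), the number of iterations $\tau(k)$ needed for such a crossing satisfies $\tau(k)\gtrsim \epsilon/\rho_k\to\infty$. But during those $\tau(k)$ iterations the bracket stays above $\epsilon/4$, so by \eqref{eq::muUpd} with constant step $\beta$ the capacity price increases by at least $\beta\tau(k)\epsilon/4$, eventually exceeding the threshold $\bar\mu_{(i,j)}$ from Remark~II.2. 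Once $\mu>\bar\mu_{(i,j)}$, the subproblem returns $\widehat{\Delta T}=0$, and the inertial update \eqref{eq::upT} then makes $|\Delta T^{k+l}|$ nonincreasing, so the bracket cannot upcross $3\epsilon/4$ --- a contradiction. This is where the two-timescale structure is actually used: the slow primal step stretches the crossing time, and the fast dual step exploits that stretch to hit the threshold $\bar\mu_{(i,j)}$. Your sketch gestures at the timescale separation but does not use it in this concrete way.
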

\begin{proof}
See Appendix \ref{App::LemmaMuConv}.
\end{proof}

We now state the main result of this section.

\begin{theorem} \label{Theorem::XConv}
    (a) For every market operator $a \in \mathcal{A}$, the sequence $\{\bm{x}_a^k=( \bm{\Delta}\mathbf{T}^{k}_a, \bm{\theta}_a^k, \bm{\delta}_a^k) : k\geq 0\}$ derived from the iterative mechanism in Algorithm \ref{Algorithm} 
    converges to $\bm{x}_a^{\infty}$. (b) The joint strategy $\bm{x}^{\infty}$ is a Nash equilibrium of the coupled intra-day market clearing game with capacity prices $\bm{\mu}^{\infty}$.
\end{theorem}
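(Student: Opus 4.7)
The plan is to leverage Lemma~\ref{Lemma::MuConv} (which fixes $\bm{\mu}^k \to \bm{\mu}^\infty$) and then analyze the $\bm{x}$-update as a Krasnoselskii--Mann averaging scheme with a vanishing parametric perturbation. Concretely, for each area $a$, define the single-valued best-response selection
\[
F_a(\bm{x}_{-a}, \bm{\mu}) \;:=\; \arg\min\,\{V_a(\bm{x}_a;\bm{x}_{-a},\bm{\mu}) : \text{constraints \eqref{eq::decen_load_bal_0}--\eqref{eq::decen_theta_1} hold}\},
\]
which is single-valued by the strict convexity in Assumption~\ref{assumption}.1 (the quadratic flow penalty $\tfrac{\mu}{2}|\Delta T|$ contributes strict convexity in $\bm{\Delta}\mathbf{T}_a$, and the phase-angle constraints pin down $\bm{\theta}_a$ given the flows). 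By Remark~2.1 (Lipschitz continuity of primal/dual parametric solutions), the joint map $F(\bm{x},\bm{\mu})=(F_a)_{a\in\mathcal{A}}$ is Lipschitz in both arguments on the compact feasible set. The iteration in Algorithm~\ref{Algorithm} then rewrites as
\[
\bm{x}^k = (1-\rho_k)\bm{x}^{k-1} + \rho_k F(\bm{x}^{k-1},\bm{\mu}^{k-1})
     = (1-\rho_k)\bm{x}^{k-1} + \rho_k F(\bm{x}^{k-1},\bm{\mu}^\infty) + \rho_k\, \varepsilon_k,
\]
where $\|\varepsilon_k\|\leq L\|\bm{\mu}^{k-1}-\bm{\mu}^\infty\|\to 0$ by Lemma~\ref{Lemma::MuConv}.

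For part (a), I would apply the ODE method for stochastic approximation with vanishing step sizes $\rho_k$ satisfying $\sum\rho_k=\infty$, $\sum\rho_k^2<\infty$: the iteration is a noisy Euler discretization of the continuous-time flow
\[
\dot{\bm{x}}(\tau) = F(\bm{x}(\tau),\bm{\mu}^\infty) - \bm{x}(\tau).
\]
Boundedness of iterates is automatic from Assumption~\ref{assumption}.2 (each market remains feasible without flexibility trades, so the feasible set, and hence $F$'s range, is bounded). Limit points of $\{\bm{x}^k\}$ are therefore contained in the internally-chain-recurrent set of the ODE, which equals the set of fixed points $\{\bm{x}: \bm{x}=F(\bm{x},\bm{\mu}^\infty)\}$. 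To upgrade ``limit points'' into a genuine limit, I would exhibit a Lyapunov function; the natural candidate is the aggregate value $\Phi(\bm{x})=\sum_a V_a$ evaluated at the coupled feasible point, using the fact that the chance-constrained structure of \eqref{eq::decen_load_bal_5} and the symmetric intertie penalty $\tfrac{\mu}{2}|\Delta T|$ make the game a potential game in the coupling variables (this potential structure is foreshadowed by the forthcoming efficiency claim of the paper). Along the averaged iterates, $\Phi(\bm{x}^k)$ is monotonically decreasing up to $O(\rho_k^2)$ terms whose sum converges, giving convergence of $\Phi(\bm{x}^k)$ and, by strict convexity, convergence of $\bm{x}^k$ itself to some $\bm{x}^\infty$.

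For part (b), passing to the limit in $\bm{x}^k = (1-\rho_k)\bm{x}^{k-1}+\rho_k F(\bm{x}^{k-1},\bm{\mu}^{k-1})$, using continuity of $F$ and $\rho_k\to 0^+$ together with $\sum\rho_k=\infty$, yields the fixed-point relation $\bm{x}^\infty = F(\bm{x}^\infty,\bm{\mu}^\infty)$. Unpacking the definition of $F_a$ gives
\[
V_a(\bm{\Delta}\mathbf{P}_a^\infty,\bm{\Delta}\mathbf{T}_a^\infty,\bm{\theta}_a^\infty;\bm{\delta}_{-a}^\infty,\bm{\mu}^\infty)
\;\leq\; V_a(\bm{\Delta}\mathbf{P}_a,\bm{\Delta}\mathbf{T}_a,\bm{\theta}_a;\bm{\delta}_{-a}^\infty,\bm{\mu}^\infty)
\]
for every $a$ and every $(\bm{\Delta}\mathbf{P}_a,\bm{\Delta}\mathbf{T}_a,\bm{\theta}_a)$ that is feasible given $\bm{\theta}_{-a}^\infty$; feasibility of $\bm{x}^\infty$ itself follows from feasibility of each iterate and closedness of the constraint set, together with the complementary-slackness limit \eqref{eq::lemma} which enforces the tie-line capacity constraint in the limit. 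This is exactly the Nash equilibrium condition of Definition~1.

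The principal obstacle is the convergence of the whole sequence (rather than subsequential convergence to the fixed-point set) in part (a). Lipschitzness of $F$ alone is not enough; one really needs either nonexpansiveness of $F$ in an appropriate norm, or a strict potential/Lyapunov function that decreases along averaged trajectories. I would resolve this by exploiting that the intertie penalty $\tfrac{\mu_{(i,j)}}{2}|\Delta T_{a,(i,j)}|$ is common across the two market operators sharing the tie-line, which makes the sum of area objectives a potential for the coupling game; combined with strict convexity of $C_{a,g}$, this gives a strictly decreasing Lyapunov function along the averaged iteration, hence convergence of the full sequence.
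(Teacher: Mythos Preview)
Your route is genuinely different from the paper's. The paper never invokes the ODE method, potential games, or a Lyapunov function. It proves an elementary sequence lemma (if $y_{k+1}\le(1-\rho_k)y_k+b_k$ with $y_k\ge0$, $\sum_k\rho_k=\infty$ and $\sum_k b_k<\infty$, then $y_k\to0$) and applies it repeatedly: first to $y_k=\|\bm{x}_a^k-\bm{x}_a^{k-1}\|$, using Lipschitz continuity of the primal/dual solution map (Remark~II.1) to control $\|\hat{\bm{x}}_a^{k+1}-\hat{\bm{x}}_a^k\|$; then, by the same manipulation, to $\|\bm{x}_a^{k+T}-\bm{x}_a^k\|$ for every fixed $T$; and finally to $\|\bm{x}_a^k-\hat{\bm{x}}_a^\infty\|$ to conclude $\bm{x}_a^\infty=\hat{\bm{x}}_a^\infty$. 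Part~(b) then follows directly from optimality of the best response together with the complementary-slackness limit of Lemma~\ref{Lemma::MuConv}. So what the paper buys is a self-contained argument that needs only Lipschitz dependence of the parametric optimum, not any game-theoretic structure.

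The step you flag as the principal obstacle is a genuine gap in your argument, and your proposed resolution does not work. The potential-game claim fails for this game: the state $\bm{x}_a$ includes $\bm{\delta}_a$, and $\delta_{a,(i,j)}$ appears only in the \emph{neighbor's} objective $V_{a'}$ (through the term $-\delta_{a,(i,j)}\Delta T_{a',(j,i)}$), not in $V_a$ itself. A unilateral change in $\bm{\delta}_a$ therefore leaves $V_a$ unchanged but moves $\sum_{a'}V_{a'}$, so $\Phi=\sum_a V_a$ cannot be an exact potential in the variables $(\bm{\Delta T}_a,\bm{\theta}_a,\bm{\delta}_a)$. More fundamentally, $\bm{\delta}_a$ is not a primal minimizer at all: it is a dual quantity read off the KKT conditions of area $a$'s clearing problem, so the map $F_a$ you define as an $\arg\min$ of $V_a$ does not even determine the $\bm{\delta}_a$-component, and your Lyapunov-descent argument has no handle on that block of the iterate. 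A secondary error: the penalty $\tfrac{\mu_{(i,j)}}{2}|\Delta T_{a,(i,j)}|$ is piecewise linear, not quadratic, hence not strictly convex, so it does not supply the strict convexity in $\bm{\Delta T}_a$ you assert. Without a valid strict Lyapunov function, the ODE framework yields only that limit points lie in the fixed-point set, not convergence of the full sequence; the paper's recursive-inequality argument is designed to deliver exactly that conclusion without any such structure.
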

\begin{proof}
See Appendix \ref{App::LemmaXConv}.
\end{proof}

The use of two-time scales for implementing adjustments, i.e., {\em fast} intertie capacity pricing adjustments vs {\em slow} updates in intertie flows, is used to establish the above result. Intuitively, this can be seen as enabling the identification of approximately optimal capacity prices for every given set of intertie flows across areas.\footnote{ Similar two-time scale algorithms are used in reinforcement learning wherein {\em faster} value estimates (critic) are combined with {\em slower} policy updates (actor).} 

The following corollary states that {\em consensus} between market operators regarding the direction and magnitude of interconnection flows is an inherent feature of the equilibrium.

\begin{corollary}
For every intertie connecting node $i$ in area $a$ and node $j$ in area $a'$, it holds that:
\[
T^\mathrm{DA}_{a,(i,j)} + \Delta 
T_{a,(i,j)}^{\infty}
=-(
T^\mathrm{DA}_{a',(i,j)} + \Delta 
T_{a',(i,j)}^{\infty})
\]
\end{corollary}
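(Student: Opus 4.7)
The plan is to derive the consensus identity directly from the equality constraint \eqref{eq::decen_T_flow} that pins down the tie-line flows in each market operator's local problem, evaluated at the limit point $\bm{x}^{\infty}$. By Theorem \ref{Theorem::XConv}, $\bm{x}^{\infty}$ is a Nash equilibrium of the coupled intraday market clearing game, and hence the tuple $(\bm{\Delta}\mathbf{T}_a^{\infty},\bm{\theta}_a^{\infty})$ is a feasible point of area $a$'s problem given the angle reports $\theta_{a',j}^{\infty}$ coming from the adjacent area $a'$. In particular, constraint \eqref{eq::decen_T_flow} must hold with equality at the limit.

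First, I would apply \eqref{eq::decen_T_flow} from the perspective of area $a$ for the intertie $(i,j) \in \mc{T}_a$, which gives
\begin{equation*}
\Delta T_{a,(i,j)}^{\infty} = \frac{\theta_{a,i}^{\infty} - \theta_{a',j}^{\infty}}{\bar{x}_{a,(i,j)}} - T^\mathrm{DA}_{a,(i,j)}.
\end{equation*}
Next, I would write the same equality from the perspective of area $a'$, which models the same physical intertie (using its own local indexing $(j,i) \in \mc{T}_{a'}$ and the same reactance $\bar{x}_{a',(j,i)} = \bar{x}_{a,(i,j)}$),
\begin{equation*}
\Delta T_{a',(j,i)}^{\infty} = \frac{\theta_{a',j}^{\infty} - \theta_{a,i}^{\infty}}{\bar{x}_{a,(i,j)}} - T^\mathrm{DA}_{a',(j,i)}.
\end{equation*}

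Adding the two identities, the angle terms cancel exactly, and the result rearranges to
\begin{equation*}
\bigl( T^\mathrm{DA}_{a,(i,j)} + \Delta T_{a,(i,j)}^{\infty} \bigr) + \bigl( T^\mathrm{DA}_{a',(j,i)} + \Delta T_{a',(j,i)}^{\infty} \bigr) = 0,
\end{equation*}
which is the claimed consensus identity (the sign convention in area $a'$'s local indexing matches the statement of the corollary). There is no hard step: the entire content of the corollary is that the equality constraint \eqref{eq::decen_T_flow} is a shared DC-flow relation, so once Theorem \ref{Theorem::XConv} guarantees feasibility of both local problems at the common limit angles $\bm{\theta}^{\infty}$, sign consistency of the physical flow on the intertie follows immediately. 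The only subtlety worth flagging in the write-up is that the iterative mechanism broadcasts $\theta_{a,i}^{\infty}$ and $\theta_{a',j}^{\infty}$ across interties, so both area problems are evaluated with the \emph{same} pair of boundary angles, which is precisely what makes the angle cancellation legitimate.
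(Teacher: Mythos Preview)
Your proposal is correct and follows essentially the same approach as the paper: both arguments evaluate the DC-flow equality \eqref{eq::decen_T_flow} for each area at the limit and add the two identities so that the angle terms cancel. The only cosmetic difference is that the paper tracks the relation along the iterates (writing $T^\mathrm{DA}_{a,(i,j)}+\Delta T_{a,(i,j)}^k=(\theta_{a,i}^k-\theta_{a',j}^{k-1})/\bar{x}_{a,(i,j)}$ via the update rules \eqref{eq::upT}--\eqref{eq::upThta} and then letting $k\to\infty$), whereas you invoke Theorem~\ref{Theorem::XConv} to assert feasibility directly at the limit; the content is the same.
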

\begin{proof}
From the update equations \eqref{eq::upT} and \eqref{eq::upThta}, and the DC flow \eqref{eq::decen_T_flow}, it follows that $
T^\mathrm{DA}_{a,(i,j)}+\Delta T_{a,(i,j)}^k
=\frac{\theta_{a,i}^k - \theta_{a',j}^{k-1}}{\bar{x}_{a,(i,j)}}$
and $
T^\mathrm{DA}_{a',(i,j)}+\Delta T_{a',(i,j)}^k
=\frac{\theta_{a',j}^k - \theta_{a,i}^{k-1}}{\bar{x}_{a,(i,j)}}
$. The result follows from taking limits as $k \rightarrow \infty$.
\end{proof}

\subsection{Black-box Model for Market Clearing} \label{sec:BlackBox}

We now consider a more general formulation of internal market clearing. Specifically, given angles $\bm{\theta}_{-a}$ of markets other than $a$, we denote by $\mathcal{F}_a(\bm{\theta}_{-a})$ the set of feasible internal market clearing solutions for market $a$.
Given the capacity prices $\bm{\mu}$ and prices for interconnection flows $\bm{\delta}_{-a}$, a more general formulation of market clearing for market $a$ is:
\begin{equation}
    \begin{array}{cr}
    \min & V_a(
   \boldsymbol{\Delta} \mathbf{P}_a, 
   \boldsymbol{\Delta} \mathbf{T}_a,  
   \boldsymbol{\theta}_{a};  \boldsymbol{\delta}_{-a}, \boldsymbol{\mu} ) \\
    \text{s.t.} & \\
     & (\bm{\Delta} \mathbf{P}_a, \bm{\Delta} \mathbf{T}_a, \bm{\theta}_{a}) \in \mathcal{F}_a(\bm{\theta}_{-a})
    \end{array}
\label{black_box}    
\end{equation}
The above formulation could be relevant when market operators differ in how their respective market clearing solutions address reliability concerns. In this more general setting we re-state assumption \eqref{assumption} as follows:
\begin{assumption}
Regularity conditions on problem \eqref{black_box}:
\begin{enumerate}
    \item The objective $V_a$ is strictly convex in $(\bm{\Delta} \mathbf{P}_a, \bm{\Delta} \mathbf{T}_a, \bm{\theta}_{a})$. For fixed values of $(\bm{\Delta} \mathbf{P}_a, \bm{\Delta} \mathbf{T}_a, \bm{\theta}_{a})$, the objective function $V_a$ is (strictly) monotone increasing in the capacity price $\mu_{(i,j)}$.
    \item The set-valued map $\mathcal{F}_a(\bm{\theta}_{-a})$ is continuous. For fixed $\theta_{-a}$, the set $\mathcal{F}_a(\bm{\theta}_{-a})$ is closed + convex.
    \item Each market can meet the local demand on its own without relying on the flexibility trades on the tie-lines.
\end{enumerate}
\end{assumption}
We now state as a corollary to extension of Theorem 1 to a general black-box setting.
\begin{corollary}
Under the assumption II.2. the iterative mechanism converges to a Nash equilibrium of the market coupling game under general convex market clearing model as in \eqref{black_box}. 
\end{corollary}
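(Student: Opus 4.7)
The plan is to establish Corollary 2 by mirroring the two-step structure used in the proof of Theorem 1, and replacing the chance-constrained-specific KKT and Lipschitz arguments with abstract continuity results drawn from the regularity conditions in Assumption 2. First, I would extend Lemma 1 to the black-box setting. The essential property exploited in Lemma 1 is that the capacity prices cannot escape to infinity, because beyond a threshold $\bar{\mu}_{(i,j)}$ each market's best response forces the corresponding tie-line flow to zero. In the abstract model this still holds: Assumption 2.3 (self-sufficiency) ensures that \eqref{black_box} admits a feasible solution with $\widehat{\bm{\Delta}\mathbf{T}}_a$ cancelling $\mathbf{T}^{\mathrm{DA}}_a$ at cost bounded independently of $\bm{\mu}$, while Assumption 2.1 (strict monotonicity of $V_a$ in $\mu_{(i,j)}$) guarantees that a sufficiently large capacity price renders any nonzero tie-line flow strictly suboptimal. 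Combined with the update rule \eqref{eq::muUpd}, the bounded-excursion argument of Lemma 1 then yields $\mu^k_{(i,j)} \to \mu^\infty_{(i,j)} \geq 0$ together with the complementary slackness relation \eqref{eq::lemma}.

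Next I would establish convergence of the slow-timescale iterates to a Nash equilibrium. The new ingredient needed in the abstract setting is continuity of the best-response map in the parameters $(\bm{\theta}_{-a},\bm{\delta}_{-a},\bm{\mu})$. This follows from Berge's maximum theorem: by Assumption 2.1 the objective $V_a$ is strictly convex in the decision variables, so the minimizer is unique; by Assumption 2.2 the feasibility correspondence $\mathcal{F}_a(\bm{\theta}_{-a})$ is continuous with closed convex values. Hence the optimizer $\widehat{\bm{x}}_a=(\widehat{\bm{\Delta}\mathbf{P}}_a,\widehat{\bm{\Delta}\mathbf{T}}_a,\widehat{\bm{\theta}}_a)$ is a continuous function of its parameters. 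With this continuity in place, the two-time-scale argument from Theorem 1 carries over: the constant-step-size iteration for $\bm{\mu}^k$ tracks the capacity-price component of the equilibrium for the current slow-scale tie-line schedule, while the Robbins--Monro averaging $\bm{x}_a^k = (1-\rho_k)\bm{x}_a^{k-1} + \rho_k\widehat{\bm{x}}_a^k$ with $\rho_k \to 0^+$, $\sum_k \rho_k = +\infty$, $\sum_k \rho_k^2 < +\infty$ drives $\bm{x}_a^k$ to a fixed point of the joint best-response map. Any such fixed point is, by definition, a Nash equilibrium of the coupled game at capacity prices $\bm{\mu}^\infty$, which is the conclusion of the corollary.

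The principal obstacle I anticipate lies in the treatment of the willingness-to-pay variable $\bm{\delta}_a$. In the chance-constrained case it arises explicitly as a dual multiplier associated with the interconnection-flow constraints \eqref{eq::decen_T_flow}, and Lipschitz continuity is inherited from standard parametric convex optimization theory. In the abstract black-box formulation, however, \eqref{black_box} does not expose duals directly, so $\bm{\delta}_a$ must be defined as a subgradient selection from $\partial_{\bm{\Delta}\mathbf{T}_a} V_a^\star$ of the optimal-value function. Single-valuedness and continuity of such a selection require either differentiability of the value function, which follows from strict convexity of $V_a$ together with a constraint qualification on $\mathcal{F}_a$, or an appeal to measurable selection. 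I would address this by augmenting Assumption 2.2 with an implicit constraint qualification so that a Danskin-type envelope argument yields a continuous $\bm{\delta}_a$, thereby making the update \eqref{eq::upDlta} well-defined and closing the two-time-scale convergence argument.
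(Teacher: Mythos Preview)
The paper does not actually prove Corollary~2; it is stated without argument, implicitly claiming that the proof of Theorem~1 goes through verbatim under Assumption~II.2. Your proposal to rebuild the two pillars (Lemma~1 and Theorem~1) in the abstract setting is therefore the right instinct, and your identification of the $\bm{\delta}_a$ issue is a genuine observation that the paper glosses over.

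There is, however, a real gap in your plan for the second step. Berge's maximum theorem, under the hypotheses of Assumption~II.2, delivers only \emph{continuity} of the best-response map $(\bm{\theta}_{-a},\bm{\delta}_{-a},\bm{\mu})\mapsto \widehat{\bm{x}}_a$. The proof of Theorem~1(a) in the paper does not use mere continuity: it invokes \emph{Lipschitz} continuity of primal and dual solutions (Remark~II.1, citing parametric convex optimization) to obtain the quantitative bound
\[
\|\widehat{\bm{x}}_a^{k+1}-\widehat{\bm{x}}_a^{k}\| \leq L_a\big(\|\bm{x}_{-a}^{k}-\bm{x}_{-a}^{k-1}\|+\|\bm{\mu}^{k}-\bm{\mu}^{k-1}\|\big)=\mathcal{O}(\rho_k)+L_a\|\bm{\mu}^{k}-\bm{\mu}^{k-1}\|,
\]
which is what feeds the summability condition $\sum_k b_k<\infty$ in Lemma~3. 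With only Berge-type continuity you cannot control $\|\widehat{\bm{x}}_a^{k+1}-\widehat{\bm{x}}_a^{k}\|$ in terms of $\rho_k$, and the Cauchy argument stalls. So ``the two-time-scale argument from Theorem~1 carries over'' is precisely the step that fails. To close it you would need to strengthen Assumption~II.2 (e.g.\ to Lipschitz continuity of $\mathcal{F}_a$ in the Hausdorff sense plus strong convexity of $V_a$, which together yield Lipschitz selections), or replace the Cauchy-type argument by a genuinely different convergence proof that only needs continuity of the best response. A minor secondary point: in your Lemma~1 extension, the penalty in $V_a$ is on $|\Delta T_{a,(i,j)}|$, so a large $\mu_{(i,j)}$ drives $\widehat{\Delta T}_{a,(i,j)}\to 0$, not $\widehat{\Delta T}_{a,(i,j)}\to -T^{\mathrm{DA}}_{a,(i,j)}$; this does not affect the logic of the bounded-excursion argument but your phrasing ``cancelling $\mathbf{T}^{\mathrm{DA}}_a$'' should be corrected.
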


\vspace{-6pt}
\section{A Theoretical Benchmark for Market Efficiency Analysis}
\label{3_Benchmark}
%
%
%
%

In this section we define a theoretical benchmark as the solution to the optimization problem faced by a hypothetical omniscient market operator with complete information on the cost functions of {\em all} generators, the forecasts for the nodal demands, and the grid structures, and the reliability standards for all regions.
Then, we show that the Nash equilibrium identified in Theorem 1 corresponds to the solution of the optimal (centralized) procurement of flexibility subject to each  individual market's chance constraint.

\vspace{-6pt}

\subsection{Theoretical Benchmark}

The theoretical benchmark is formulated as:
\begin{flalign}
    & \min_{\bm{\Delta} \mathbf{P}_a,\bm{\Delta} \mathbf{T}_a, \bm{\theta}_{a}, \forall a \in \mc{A}} V = \sum_{a \in \mc{A}} \sum_{g \in \mc{G}_a} C_{a,g} (P^\mathrm{DA}_{a,g} + \Delta P_{a,g})  \label{eq::cen_obj_fun}  \\
    & \mbox{s.t.}    \nn \\
    & \mathbf{M}_{a} (\mathbf{P}^\mathrm{DA}_{a} + \bm{\Delta} \mathbf{P}_{a})-\mathbf{B}_{a} \bm{\theta}_{a} - \mathbf{R}_{a} (\mathbf{T}^\mathrm{DA}_{a} + \mathbf{\Delta T}_a) \succeq    \mathbf{D}_{a}, \nn \\
    & \hspace{160pt} (\bm{\alpha}_{a}), \hspace{4pt} \forall a \in \mc{A}, \label{eq::cen_load_bal_0}\\
        & \underline{\mathbf{P}}_a \leq \mathbf{P}^\mathrm{DA}_a + \bm{\Delta} \mathbf{P}_a \leq \overline{\mathbf{P}}_a, \hspace{35pt} (\bm\nu_{a} , \bm\lambda_{a}), \hspace{4pt} \forall a \in \mc{A}, \\
            &\underline{\mathbf{R}}_a \leq  \bm{\Delta} \mathbf{P}_a \leq \overline{\mathbf{R}}_a, \hspace{64pt} (\bm\psi_{a} , \bm\varphi_{a}), \hspace{4pt} \forall a \in \mc{A}, \label{eq::cen_Ramp_Lim}\\
             & - \overline{H}_{a,(i,j)} \leq \frac{\theta_{a,i} - \theta_{a,j}}{x_{a,(i,j)}} \leq \overline{H}_{a,(i,j)}, \hspace{6pt} (\kappa_{a,(i,j)}, \eta_{a,(i,j)}), \nn\\
             & \hspace{130pt} \forall (i,j) \in \mc{H}_a, \hspace{4pt}  \forall a \in \mc{A}, \\
                 & \Delta T_{a,(i,j)} = \frac{\theta_{a,i} - \theta_{a',j}}{\bar{x}_{a,(i,j)}} - T_{a,(i,j)}, \hspace{10pt} (\xi_{a,(i,j)}), \nn \\ 
    & \hspace{131pt} \forall (i,j) \in \mc{T}_a, \hspace{4pt} \forall a \in \mc{A}, 
 \end{flalign}
 \begin{flalign}        
    & -\overline{\mathbf{T}}_a \leq \mathbf{T}^\mathrm{DA}_a + \bm{\Delta} \mathbf{T}_a \leq \overline{\mathbf{T}}_a, \hspace{10pt} (\bar{\kappa}_{a,(i,j)}, \bar{\eta}_{a,(i,j)}), \nn \\
    & \hspace{131pt} \forall (i,j) \in \mc{T}_a, \hspace{4pt} \forall a \in \mc{A}, \label{eq::TcenMarkIneq} \\
& \Pr(\mathbf{1}^{\top} \mathbf{D}_a \hspace{-1pt} \leq
 \mathbf{1}^{\top} (\mathbf{P}^\mathrm{DA}_a + \bm{\Delta} \mathbf{P}_a) 
 -\mathbf{1}^{\top} (\mathbf{T}^\mathrm{DA}_a \hspace{-1pt} + \hspace{-1pt} \bm{\Delta} \mathbf{T}_a))\hspace{-1pt} \hspace{-1pt} \hspace{-1pt}
 \nn \\ &\hspace{110pt} \geq 1 - t_a, 
       \hspace{5pt} (\gamma_a), \hspace{4pt} \forall a \in \mc{A}, \label{eq::cen_load_bal_5} \\
    & \theta_{1,1} = 0.   \label{eq::cen_theta_1}
\end{flalign}
In the formulation, an omniscient agent aims to minimize total cost of power generation while satisfying the demand \eqref{eq::cen_load_bal_0} and reliability \eqref{eq::cen_load_bal_5} constraints of each area by making use of the tie-lines between areas. The solution to the above problem is denoted with $\{ \bm{\Delta} \mathbf{P}_{a}^{\ast}, \bm{\Delta} \mathbf{T}_{a}^{\ast}, \bm{\theta}_{a}^{\ast}\}, \hspace{5pt} \forall a \in \mc{A}$.

\vspace{-6pt}
\subsection{Efficient Decentralized Intraday Market Operation} 

In what follows we show that with the appropriate terms of trade from neighboring markets (i.e. values of capacity price $\mu_{(i,j)}$,
the willingness to pay for flexibility $\delta_{a',(i,j)}$ and the angles $\theta_{a',(i,j)}$), the solution to the intraday market clearing problem for market $a$ corresponds to the theoretical benchmark introduced in the previous section.

\begin{proposition}
 \label{Prop::Proposition_1}
If the terms of trade from market $a$ with neighboring areas $a' \in \mathcal{A}\backslash{a}$ are as follows: 
\begin{align}
    \frac{\mu_{(i,j)}}{2} \coloneqq (-\bar{\kappa}_{a,(i,j)}^{\ast} &+ \bar{\eta}_{a,(i,j)}^{\ast}) \cdot \mbox{sign}(\Delta T_{a,(i,j)}^{\ast}), \nn \\
    \delta_{a',(i,j)} &\coloneqq \gamma_{a'}^{\ast} + \alpha_{a',j}^{\ast}, \nn \\
    \theta_{a',j} &\coloneqq \theta_{a',j}^{\ast},
    \label{eq::tran_rates}
\end{align}
then the solution to the DC-OPF in \eqref{eq::decen_obj_fun}-\eqref{eq::decen_theta_1}, \textit{i.e.} $\{ \widehat{\bm{\Delta}  \mathbf{P}}_{a}, \widehat{\bm{\Delta}  \mathbf{T}}_{a}, \widehat{\bm{\theta}}_{a}\}, \hspace{5pt} \forall a \in \mc{A}$ corresponds to the benchmark $\{ \bm{\Delta} \mathbf{P}_{a}^{\ast}, \bm{\Delta} \mathbf{T}_{a}^{\ast}, \bm{\theta}_{a}^{\ast}\}, \hspace{5pt} \forall a \in \mc{A}$.
\end{proposition}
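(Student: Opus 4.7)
The plan is to prove Proposition \ref{Prop::Proposition_1} by a KKT-matching argument. Since both the centralized benchmark \eqref{eq::cen_obj_fun}--\eqref{eq::cen_theta_1} and the decentralized problem \eqref{eq::decen_obj_fun}--\eqref{eq::decen_theta_1} are convex (the cost $C_{a,g}$ is strictly convex by Assumption \ref{assumption}, the tie-line term $\tfrac{\mu_{(i,j)}}{2}|\Delta T_{a,(i,j)}|$ is convex, and all remaining constraints are affine), their KKT conditions are necessary and sufficient for optimality. The strategy is to take the primal optimizer $\{\bm{\Delta}\mathbf{P}_a^{\ast},\bm{\Delta}\mathbf{T}_a^{\ast},\bm{\theta}_a^{\ast}\}$ and the dual multipliers of the benchmark, plug them into the decentralized KKT system for market $a$ when the terms of trade are chosen as in \eqref{eq::tran_rates}, and verify that all conditions hold. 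Uniqueness of the decentralized optimum (guaranteed by strict convexity of $V_a$ in $\bm{\Delta}\mathbf{P}_a$ and consequently of the full problem after fixing the tie-line and angle variables shared with neighbors) will then force $\widehat{\bm{\Delta}\mathbf{P}}_a = \bm{\Delta}\mathbf{P}_a^{\ast}$, $\widehat{\bm{\Delta}\mathbf{T}}_a = \bm{\Delta}\mathbf{T}_a^{\ast}$, $\widehat{\bm{\theta}}_a = \bm{\theta}_a^{\ast}$.

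Concretely, I would first write the Lagrangian of the decentralized problem \eqref{eq::decen_obj_fun}--\eqref{eq::decen_theta_1} with dual variables $(\bm{\alpha}_a,\bm{\nu}_a,\bm{\lambda}_a,\bm{\psi}_a,\bm{\varphi}_a,\kappa_{a,(i,j)},\eta_{a,(i,j)},\xi_{a,(i,j)},\gamma_a)$ and the Lagrangian of the benchmark with the same variables plus the extra multipliers $(\bar{\kappa}_{a,(i,j)},\bar{\eta}_{a,(i,j)})$ associated with the tie-line capacity constraint \eqref{eq::TcenMarkIneq}. The stationarity conditions with respect to $\bm{\Delta}\mathbf{P}_a$ and to the internal phase angles $\theta_{a,i}$ for $i \in \mathcal{N}_a$ depend only on quantities internal to area $a$ and will coincide in the two problems provided we assign to the decentralized duals the values of the corresponding benchmark duals. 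The only stationarity conditions that differ are (i) the one with respect to $\Delta T_{a,(i,j)}$ and (ii) the one with respect to the phase angle $\theta_{a',j}$ at the far end of the tie-line, which in the decentralized problem is treated as a parameter rather than a decision variable.

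For (i), the stationarity of the benchmark Lagrangian with respect to $\Delta T_{a,(i,j)}$ reads $\alpha_{a,i}^{\ast} - \xi_{a,(i,j)}^{\ast} - \gamma_a^{\ast} - \bar{\kappa}_{a,(i,j)}^{\ast} + \bar{\eta}_{a,(i,j)}^{\ast} = 0$, while the decentralized version (using a subgradient of $|\cdot|$ at $\Delta T_{a,(i,j)}^{\ast}$) reads $-\delta_{a',(i,j)} + \tfrac{\mu_{(i,j)}}{2}\,\mathrm{sign}(\Delta T_{a,(i,j)}^{\ast}) + \alpha_{a,i}^{\ast} - \xi_{a,(i,j)}^{\ast} - \gamma_a^{\ast} = 0$. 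Subtracting these gives exactly the first identity in \eqref{eq::tran_rates} once one substitutes $\delta_{a',(i,j)} = \gamma_{a'}^{\ast} + \alpha_{a',j}^{\ast}$ and accounts for the consensus $\alpha_{a',j}^{\ast} + \gamma_{a'}^{\ast}$ being the marginal value of an additional unit of inflow at node $j$ of area $a'$. For (ii), the benchmark contains a stationarity equation in $\theta_{a',j}$ linking markets $a$ and $a'$; in the decentralized problem $\theta_{a',j}$ is fixed to $\theta_{a',j}^{\ast}$ by \eqref{eq::tran_rates}, which trivially makes constraint \eqref{eq::decen_T_flow} reduce to the benchmark tie-line flow. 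Primal feasibility, dual feasibility, and complementary slackness in the decentralized problem then follow from the corresponding benchmark properties.

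The main obstacle is the non-smoothness introduced by the term $\tfrac{\mu_{(i,j)}}{2}|\Delta T_{a,(i,j)}|$: the $\mathrm{sign}(\cdot)$ in \eqref{eq::tran_rates} must be interpreted as a valid selection from the subdifferential of $|\cdot|$ at $\Delta T_{a,(i,j)}^{\ast}$, which requires a short case analysis separating $\Delta T_{a,(i,j)}^{\ast} \neq 0$ (where the sign is unambiguous and exactly one of $\bar{\kappa}_{a,(i,j)}^{\ast},\bar{\eta}_{a,(i,j)}^{\ast}$ is active by complementary slackness in the benchmark) from the degenerate case $\Delta T_{a,(i,j)}^{\ast} = 0$ (where both $\bar{\kappa}_{a,(i,j)}^{\ast}$ and $\bar{\eta}_{a,(i,j)}^{\ast}$ vanish by complementary slackness with the strict inequality $|T^{\mathrm{DA}}_{a,(i,j)}| < \bar{T}_{a,(i,j)}$, or a subgradient in $[-1,1]$ can be chosen to match). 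A secondary technical point is verifying that the chance constraint \eqref{eq::decen_load_bal_5} is the same in both problems so that $\gamma_a^{\ast}$ transfers cleanly; this is immediate from inspection of \eqref{eq::decen_load_bal_5} and \eqref{eq::cen_load_bal_5}. Once these subtleties are handled, the decentralized KKT system is satisfied by the benchmark primal values, and convexity closes the argument.
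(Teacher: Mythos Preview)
Your approach is correct and is essentially the same as the paper's: both proofs write out the KKT stationarity conditions for the centralized and decentralized problems, observe that all internal conditions (in $\bm{\Delta}\mathbf{P}_a$ and the internal angles) coincide, and verify that the tie-line stationarity conditions match once the terms of trade \eqref{eq::tran_rates} are substituted. The one place to tighten is your benchmark stationarity condition in $\Delta T_{a,(i,j)}$: in the centralized problem the tie-line flow also enters area $a'$'s nodal balance and chance constraint (through $\Delta T_{a',(j,i)}=-\Delta T_{a,(i,j)}$), so the correct condition carries the additional terms $-\alpha_{a',j}^{\ast}-\gamma_{a'}^{\ast}$, which are precisely what collapse into $-\delta_{a',(i,j)}$ and make the subtraction go through without the hand-wave ``accounts for the consensus.''
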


\begin{proof}
See Appendix \ref{App::Proposition_1}.
\end{proof}


The optimal willingness to pay for power flow changes in the tieline $(i,j)$ ($\delta_{a',(i,j)}$) is the sum of the prices associated with the chance constraint $\gamma_{a'}^*$ and the nodal price at node $j$ $\alpha_{a',j}^*$. 
While $\gamma_{a'}^*$ depends on the aggregate reliability of the region, $\alpha_{a',j}^*$ depends on the ability of the region to deliver extra unit of power to node $j$ which is limited by the transmission network. Because of this aggregate versus local distinction between $\gamma_{a'}$ and $\alpha_{a',j}$, we observe price for reliability $\gamma_{a'}$ demonstrates smoother changes during the intraday market coupling process compared to nodal prices (see Section \ref{6_Simul} and Fig. \ref{fig:simul1}).


In the main result of this section (Theorem 2) we show that the Nash equilibrium identified in Theorem 1 corresponds to the theoretical benchmark. The following lemma is needed in the proof of Theorem 2.

\begin{lemma} \label{lemma::Xfeasible}
    The limiting solution obtained from the iterative mechanism, \textit{i.e.} $\bm{x}_a^{\infty}, \forall a \in \mc{A}$ is feasible solution of the centralized OPF problem in \eqref{eq::cen_obj_fun} - \eqref{eq::cen_theta_1}.
\end{lemma}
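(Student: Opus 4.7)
The centralized feasibility set \eqref{eq::cen_load_bal_0}--\eqref{eq::cen_theta_1} coincides, per area, with the decentralized feasibility set \eqref{eq::decen_load_bal_0}--\eqref{eq::decen_theta_1} augmented by the single extra inequality \eqref{eq::TcenMarkIneq} (the tie-line thermal limit). The plan is therefore to establish two things: (a) every decentralized constraint of area $a$ holds at $\bm{x}_a^{\infty}$ together with an appropriate limiting dispatch $\bm{\Delta P}_a^{\infty}$; and (b) the extra inequality \eqref{eq::TcenMarkIneq} holds on every tie-line.

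Part (a) is essentially a continuity argument. At every iteration $k$, the optimal triple $(\widehat{\bm{\Delta P}}_a^k,\widehat{\bm{\Delta T}}_a^k,\widehat{\bm{\theta}}_a^k)$ returned by area $a$'s clearing problem is by construction feasible for \eqref{eq::decen_load_bal_0}--\eqref{eq::decen_theta_1} under the then-current neighbor angles $\bm{\theta}_{-a}^{k-1}$. Theorem \ref{Theorem::XConv} furnishes convergence of the averaged iterate $\bm{x}_a^k$, and the Lipschitz dependence of the parametric optimizer on its inputs (as noted in the remark following Assumption \ref{assumption}) propagates convergence back to the ``hat'' quantities, so that $\widehat{\bm{\Delta P}}_a^k \to \bm{\Delta P}_a^{\infty}$ for some limiting dispatch and $\widehat{\bm{x}}_a^k \to \bm{x}_a^{\infty}$. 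Because each constraint function in \eqref{eq::decen_load_bal_0}--\eqref{eq::decen_theta_1} is continuous in all its arguments---the chance constraint \eqref{eq::decen_load_bal_5} being handled through the convex deterministic reformulation already used in the market-clearing problem---passing to the limit preserves feasibility, now evaluated at the limiting neighbor angles $\bm{\theta}_{-a}^{\infty}$.

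The more delicate step is (b), since \eqref{eq::TcenMarkIneq} never appears inside any area's own problem and is enforced only implicitly through the penalty sequence $\mu_{(i,j)}^k$. The argument here is by contradiction. Let $\Pi_{(i,j)} := \tfrac{1}{2}\big(|\Delta T_{a,(i,j)}^{\infty}| + |\Delta T_{a',(j,i)}^{\infty}|\big) + |T^\mathrm{DA}_{a,(i,j)}| - \bar{T}_{a,(i,j)}$ denote the limit of the bracket driving the update \eqref{eq::muUpd}. If $\Pi_{(i,j)} > 0$, then by continuity $\mu_{(i,j)}^k - \mu_{(i,j)}^{k-1} \geq \beta\,\Pi_{(i,j)}/2 > 0$ for all sufficiently large $k$, which would force $\mu_{(i,j)}^k \to +\infty$ and contradict the finiteness of the limit $\mu_{(i,j)}^{\infty}$ guaranteed by Lemma \ref{Lemma::MuConv}. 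Hence $\Pi_{(i,j)} \leq 0$. Combining this bound with the consensus identity from the Corollary to Theorem \ref{Theorem::XConv}---which, under the natural day-ahead symmetry $T^\mathrm{DA}_{a,(i,j)} = -T^\mathrm{DA}_{a',(j,i)}$, gives $|\Delta T_{a,(i,j)}^{\infty}| = |\Delta T_{a',(j,i)}^{\infty}|$---together with the triangle inequality $|T^\mathrm{DA}_{a,(i,j)} + \Delta T_{a,(i,j)}^{\infty}| \leq |T^\mathrm{DA}_{a,(i,j)}| + |\Delta T_{a,(i,j)}^{\infty}|$ then yields $|T^\mathrm{DA}_{a,(i,j)} + \Delta T_{a,(i,j)}^{\infty}| \leq \bar{T}_{a,(i,j)}$, which is precisely \eqref{eq::TcenMarkIneq}. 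Together, (a) and (b) certify that $\bm{x}_a^{\infty}$ (accompanied by $\bm{\Delta P}_a^{\infty}$) is feasible for the centralized OPF problem. The hard part is the penalty/consensus interplay in (b); everything else reduces to taking limits in convex, continuous constraints.
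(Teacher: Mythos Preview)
Your proposal is correct and follows essentially the same route as the paper: both argue that the decentralized constraints carry over to the limit by continuity, and that the only extra centralized constraint \eqref{eq::TcenMarkIneq} follows because convergence of $\mu_{(i,j)}^k$ forces the bracket in \eqref{eq::muUpd} to be nonpositive, which together with the consensus relation $\Delta T_{a,(i,j)}^{\infty}=-\Delta T_{a',(j,i)}^{\infty}$ and the triangle inequality yields the tie-line bound. Your write-up is simply more explicit about the continuity in part (a) and the contradiction in part (b) than the paper's terse version.
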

\begin{proof}
See Appendix \ref{App::Xfeasible}.
\end{proof}

Next, we state and prove the main result of this section.
\begin{theorem} \label{theorem::KKtEquiv}
    Assuming that all regions adopt the chance-constraint formulation for their intraday market, as in \eqref{eq::decen_obj_fun} - \eqref{eq::decen_theta_1}, then 
    the final values derived from the iterative mechanism, \textit{i.e.} $\bm{x}_a^{\infty}, \forall a \in \mc{A}$, satisfy all the KKT conditions of the centralized OPF problem in \eqref{eq::cen_obj_fun} - \eqref{eq::cen_theta_1}.
\end{theorem}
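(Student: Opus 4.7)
The plan is to show that $\bm{x}_a^{\infty}$, together with appropriately identified dual multipliers, satisfies every primal feasibility, dual feasibility, stationarity, and complementary slackness condition of the centralized problem \eqref{eq::cen_obj_fun}--\eqref{eq::cen_theta_1}. I will rely on three prior results: Theorem \ref{Theorem::XConv} (the limit is a Nash equilibrium of the coupled game at capacity prices $\bm{\mu}^{\infty}$, hence satisfies the KKT conditions of each local chance-constrained problem \eqref{eq::decen_obj_fun}--\eqref{eq::decen_theta_1}), Lemma \ref{lemma::Xfeasible} (primal feasibility of $\bm{x}^{\infty}$ in the centralized problem), and Lemma \ref{Lemma::MuConv} (convergence of $\mu_{(i,j)}^{k}$ and complementary slackness on the tie-line capacity constraints \eqref{eq::TcenMarkIneq}).

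First, I would write down the KKT conditions of the centralized problem, labelling the dual variables exactly as in \eqref{eq::cen_load_bal_0}--\eqref{eq::cen_theta_1}, namely $(\bm{\alpha}_a, \bm{\nu}_a, \bm{\lambda}_a, \bm{\psi}_a, \bm{\varphi}_a, \bm{\kappa}_a, \bm{\eta}_a, \bm{\xi}_a, \bar{\bm{\kappa}}_a, \bar{\bm{\eta}}_a, \gamma_a)$, and the KKT conditions of each local problem using the same names (the feasible sets differ only in that the centralized problem carries the explicit tie-line capacity constraints \eqref{eq::TcenMarkIneq}). Then I would propose the candidate centralized multipliers: take $(\bm{\alpha}_a^{\infty}, \bm{\nu}_a^{\infty}, \bm{\lambda}_a^{\infty}, \bm{\psi}_a^{\infty}, \bm{\varphi}_a^{\infty}, \bm{\kappa}_a^{\infty}, \bm{\eta}_a^{\infty}, \bm{\xi}_a^{\infty}, \gamma_a^{\infty})$ directly from the local KKT systems at the limit, and define the missing centralized multipliers $\bar{\kappa}_{a,(i,j)}^{\star}, \bar{\eta}_{a,(i,j)}^{\star}$ by inverting the identification in Proposition \ref{Prop::Proposition_1}, i.e., by splitting $\tfrac{\mu_{(i,j)}^{\infty}}{2}\,\mathrm{sign}(\Delta T_{a,(i,j)}^{\infty})$ into its positive and negative parts.

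The heart of the argument is then a one-to-one verification:
\begin{itemize}
\item \emph{Primal feasibility} in the centralized problem for all constraints except the explicit tie-line capacity bound follows because $\bm{x}_a^{\infty}$ is feasible in every local problem; the missing tie-line capacity constraint \eqref{eq::TcenMarkIneq} is exactly the content of Lemma \ref{lemma::Xfeasible}.
\item \emph{Stationarity} in $\bm{\Delta}\mathbf{P}_a$ and in $\bm{\theta}_{a,i}$ for non-tie buses is identical to the local stationarity conditions (the new terms in the centralized Lagrangian vanish there). Stationarity in $\Delta T_{a,(i,j)}$ in the centralized problem picks up the contribution $\bar{\eta}_{a,(i,j)}^{\star} - \bar{\kappa}_{a,(i,j)}^{\star}$; in the local problem the analogous derivative produces $\tfrac{\mu_{(i,j)}^{\infty}}{2}\,\mathrm{sign}(\Delta T_{a,(i,j)}^{\infty})$ together with the external willingness-to-pay $\delta^{\infty}_{a',(i,j)}$. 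By Proposition \ref{Prop::Proposition_1} applied in reverse these two contributions coincide, and the consensus corollary together with the fact that at a Nash equilibrium $\delta_{a',(i,j)}^{\infty} = \gamma_{a'}^{\infty} + \alpha_{a',j}^{\infty}$ collapses the pair of local stationarity conditions at the two endpoints of a tie-line into the single centralized one.
\item \emph{Complementary slackness and dual feasibility} for the local constraints carry over directly from the local KKT systems. For the tie-line capacity constraint, $\bar{\kappa}^{\star}_{a,(i,j)}, \bar{\eta}^{\star}_{a,(i,j)} \geq 0$ by construction, and the slackness $\bar{\kappa}^{\star}_{a,(i,j)}(\Delta T^{\infty}_{a,(i,j)} + T^{\mathrm{DA}}_{a,(i,j)} + \bar{T}_{a,(i,j)}) = 0$, and the symmetric condition for $\bar{\eta}^{\star}_{a,(i,j)}$, follow from Lemma \ref{Lemma::MuConv} via the relation \eqref{eq::lemma}, together with the consensus corollary which identifies the average flow with the flow itself in the limit.
\end{itemize}

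The main obstacle I anticipate is the handling of the non-smooth term $\tfrac{\mu_{(i,j)}}{2}|\Delta T_{a,(i,j)}|$ in the local objective when writing stationarity in $\Delta T_{a,(i,j)}$: one must work with subgradients at $\Delta T_{a,(i,j)}^{\infty} = 0$ and show that the set-valued stationarity still splits into a valid pair $(\bar{\kappa}^{\star}_{a,(i,j)}, \bar{\eta}^{\star}_{a,(i,j)})$ of nonnegative multipliers. The second delicate point is pairing the local stationarity conditions at the two endpoints of each tie-line; this is where the symmetric update rule for $\mu_{(i,j)}$ in \eqref{eq::muUpd} and the consensus corollary do the work of aligning the two regions' Lagrangians. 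Once these two points are handled, the remaining KKT matching is mechanical.
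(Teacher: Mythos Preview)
Your proposal is correct and follows essentially the same route as the paper: identify the centralized multipliers with the limiting local multipliers, manufacture the missing tie-line capacity duals $\bar{\kappa}^{\star}_{a,(i,j)},\bar{\eta}^{\star}_{a,(i,j)}$ as the negative and positive parts of $\tfrac{\mu_{(i,j)}^{\infty}}{2}\,\mathrm{sign}(\Delta T_{a,(i,j)}^{\infty})$, invoke Lemma~\ref{lemma::Xfeasible} for the only extra primal constraint, and then match the stationarity conditions in $\Delta T_{a,(i,j)}$ using $\delta_{a',(i,j)}^{\infty}=\gamma_{a'}^{\infty}+\alpha_{a',j}^{\infty}$ and the consensus corollary. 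The paper carries this out explicitly only for two regions and one tie-line (claiming the general case by extension) and does not address the subgradient issue at $\Delta T_{a,(i,j)}^{\infty}=0$ that you flag, so your write-up is, if anything, slightly more complete.
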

\begin{proof}
See Appendix \ref{App::KKtEquiv}.
\end{proof}
%
\begin{RK}
A critical step in the proof of Theorem \ref{theorem::KKtEquiv} is to show that one half of the limiting capacity price $\mu^{\infty}_{(i,j)}$ corresponds to the shadow price of interconnection capacity between nodes $i$ and $j$, i.e:
\begin{align*}
\frac{\mu_{(i,j)}^{\infty}}{2} = | \bar{\eta}_{a,(i,j)}^{\ast}-\bar{\kappa}_{a,(i,j)}^{\ast}|& ~~~\mbox{if}~~|\Delta T_{a,(i,j)}^{\ast} + T^\mathrm{DA}_{a,(i,j)} | = \bar{T}_{a,(i,j)}
\end{align*}
and $\frac{\mu_{(i,j)}^{\infty}}{2}=0$ otherwise.
\end{RK}


\section{Simulations}
\label{6_Simul}
\begin{figure*}
\centering
    \subfloat[Nodal prices for the tie-lines\label{subfig-1:delta}]{%
       \includegraphics[width=0.32\linewidth]{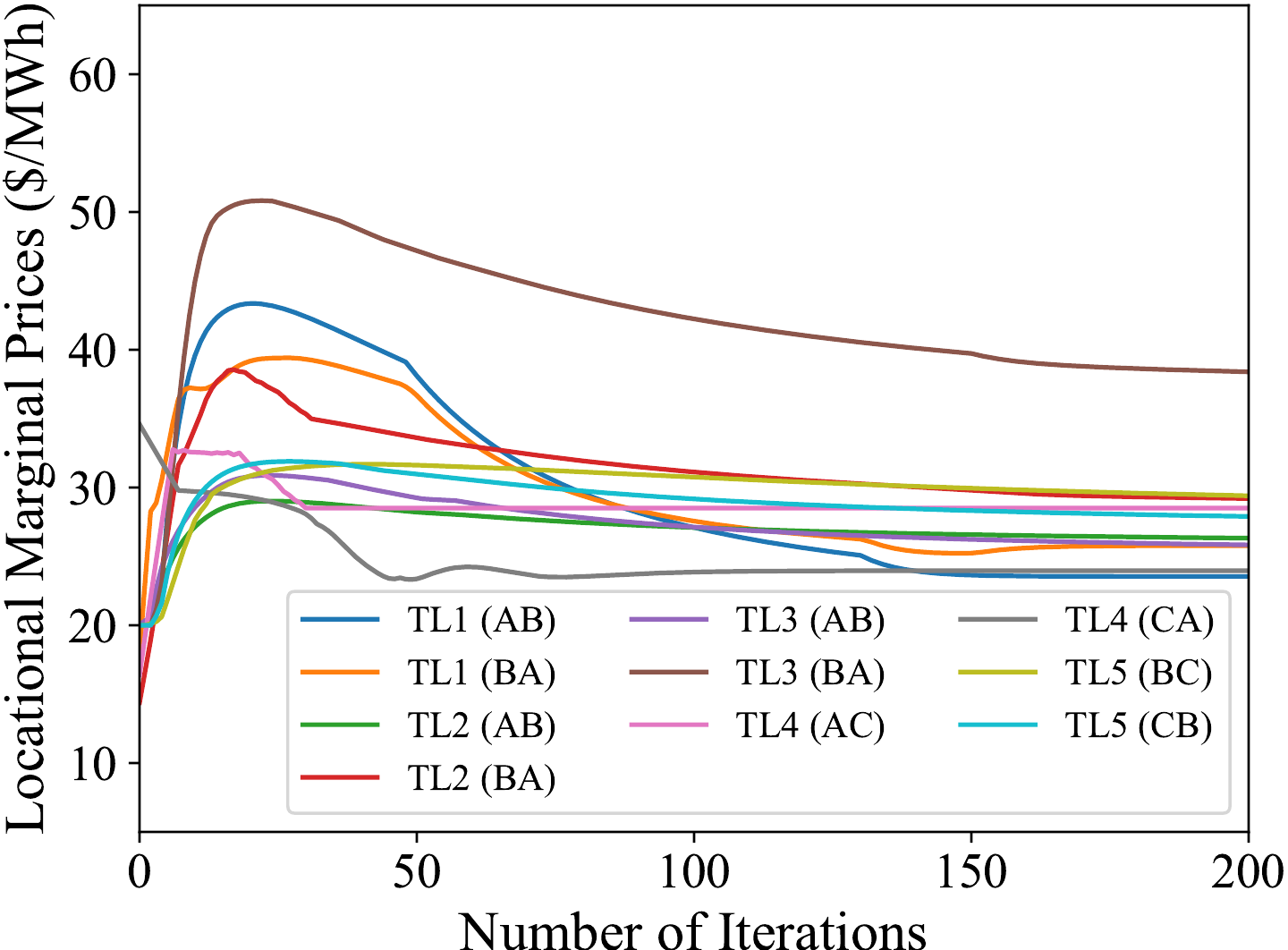}}
    \subfloat[Price for reliability ($\gamma_a$)\label{subfig-1:gamma}]{%
        \includegraphics[width=0.32\linewidth]{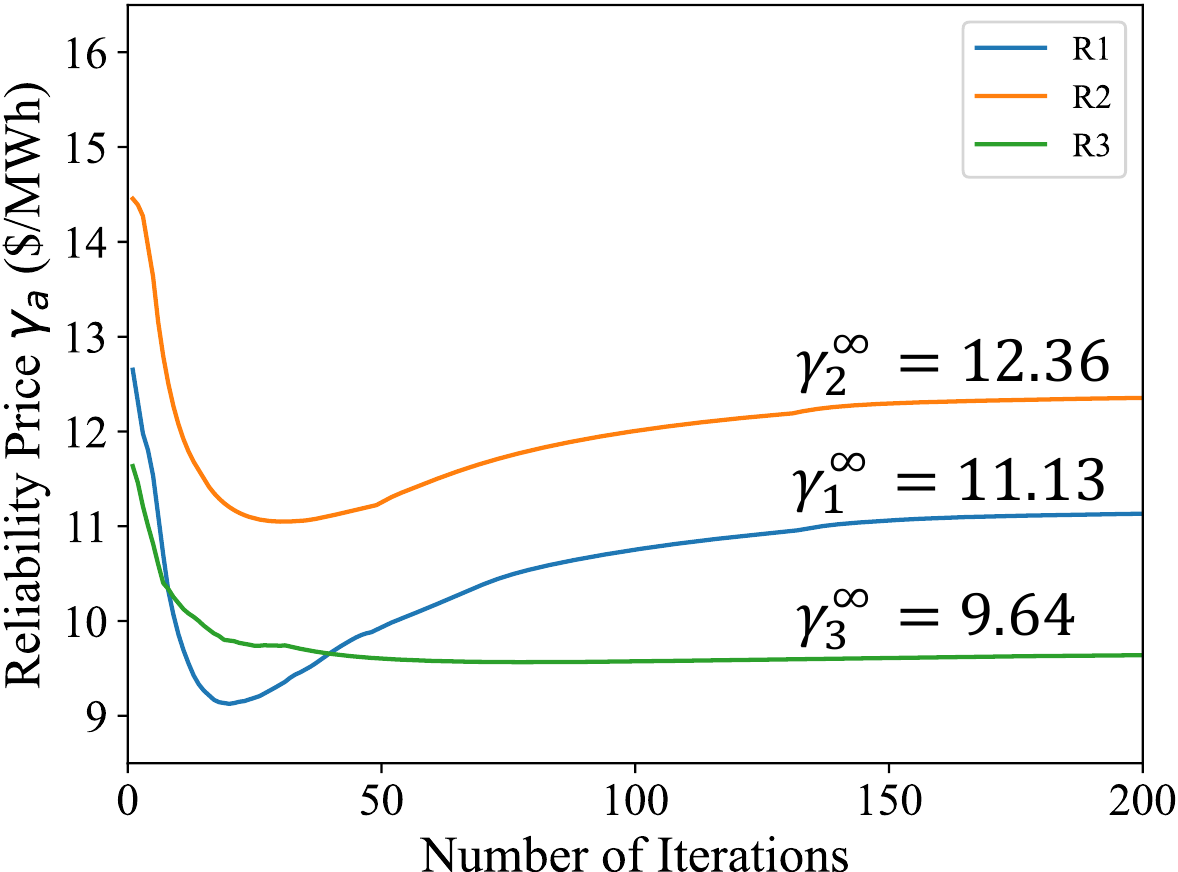}}
    \subfloat[Tie-line flows\label{subfig-1:T}]{%
        \includegraphics[width=0.32\linewidth]{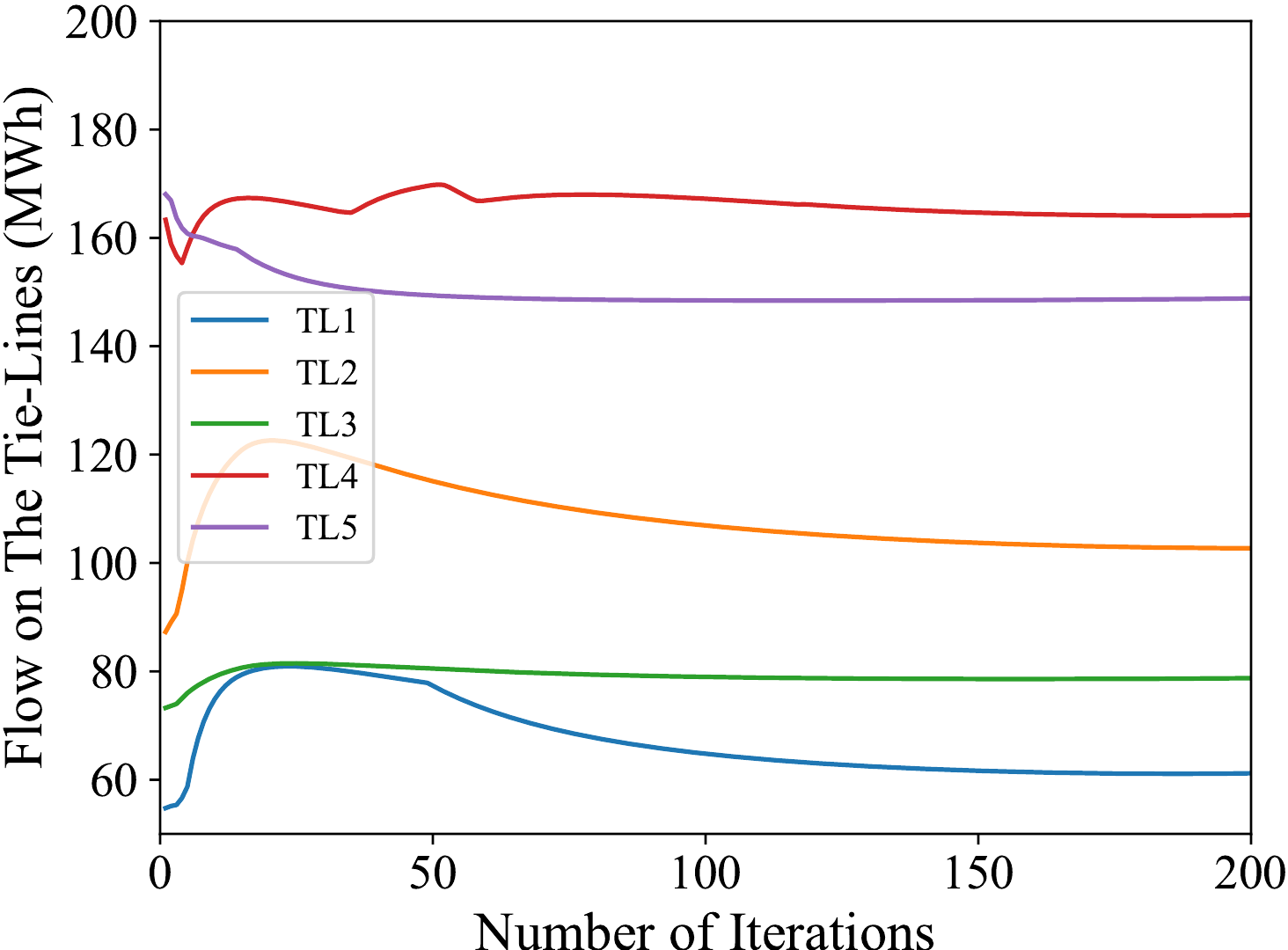}}
    \hfill
        \vspace{6pt}
    \caption{Nodal prices, Price for reliability ($\gamma_a$), and the tie-line flows.}
    \label{fig:simul1}
 \end{figure*}
 
 \begin{figure*}
\centering
    \subfloat[Nodal prices for the tie-lines\label{subfig-2:delta}]{%
       \includegraphics[width=0.32\linewidth]{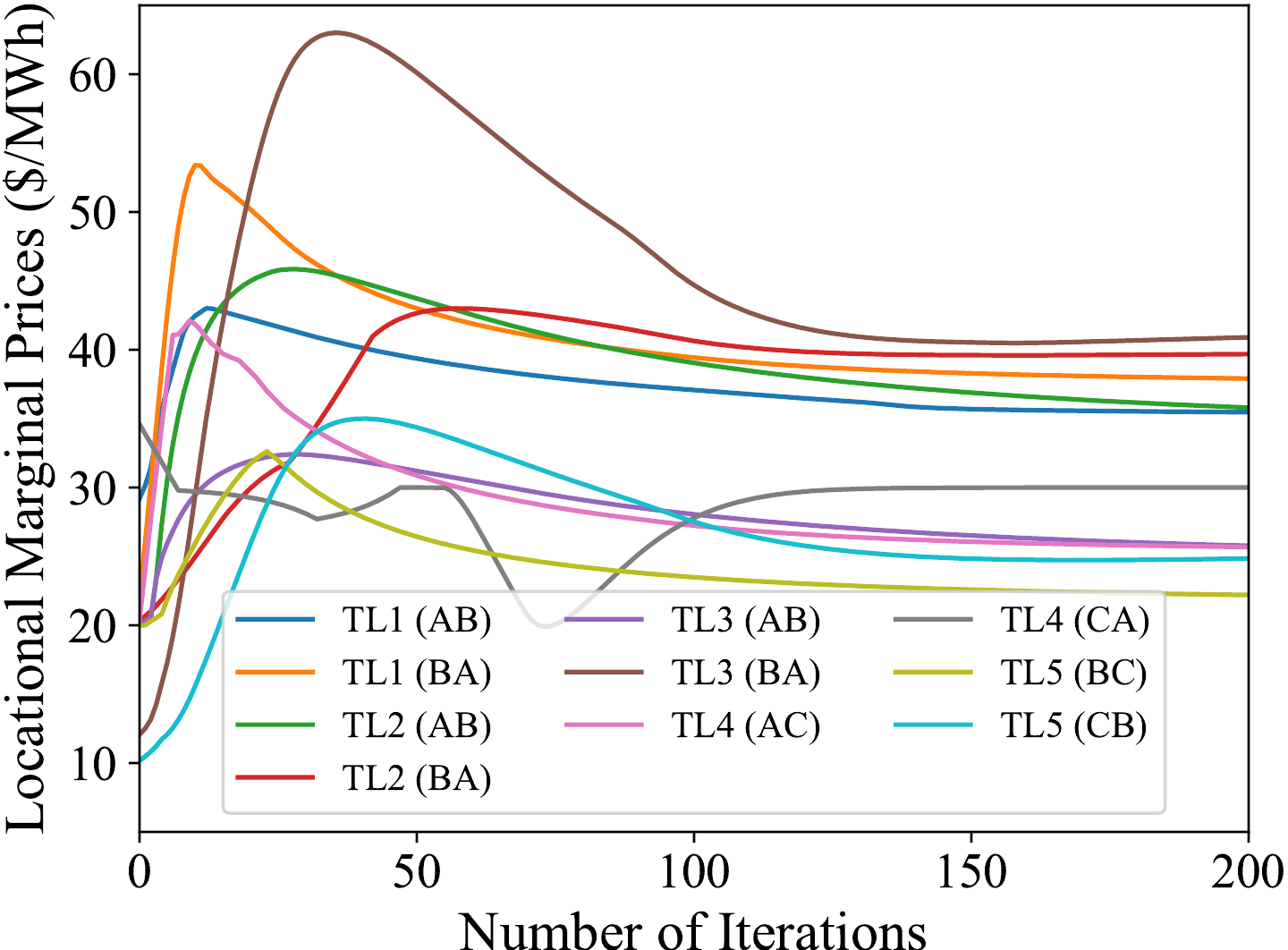}}
    \subfloat[Price for reliability ($\gamma_a$)\label{subfig-2:gamma}]{%
        \includegraphics[width=0.32\linewidth]{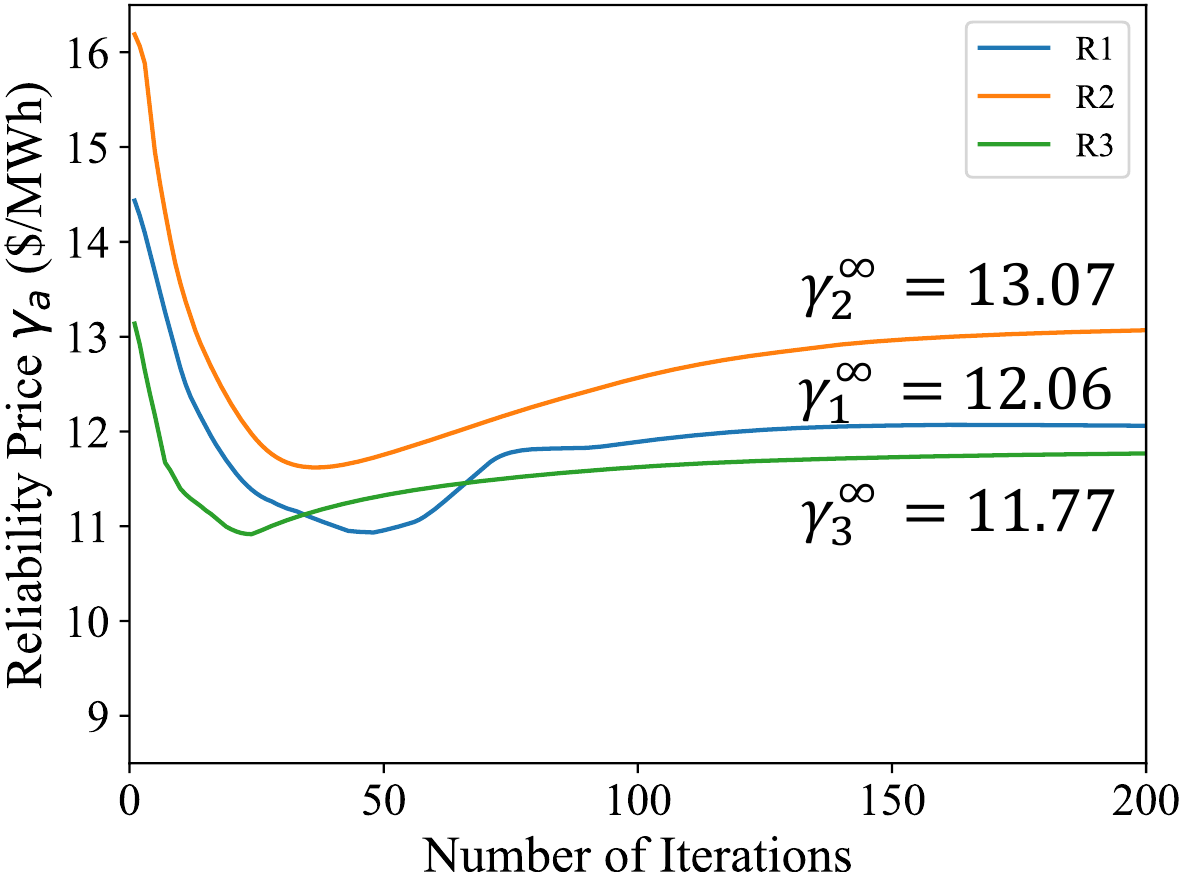}}
    \subfloat[Tie-line flows\label{subfig-2:T}]{%
        \includegraphics[width=0.32\linewidth]{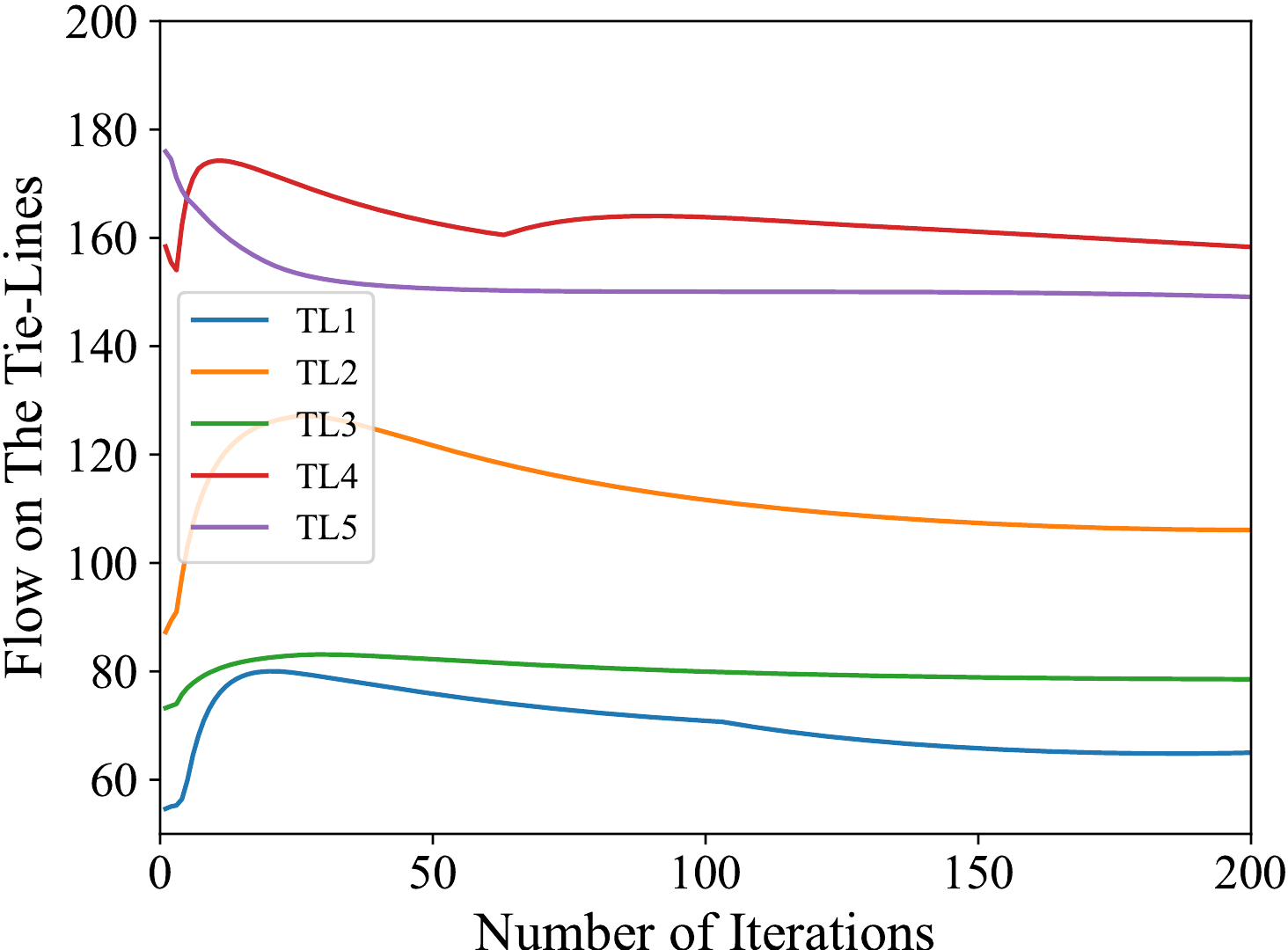}}
    \hfill
    \vspace{6pt}
  \caption{Nodal prices, Price for reliability ($\gamma_a$), and the tie-line flows  given tighter ramping rates.}
  \label{fig:simul2}
 \end{figure*}
 
 \begin{figure}
 \centering
       \includegraphics[width=0.90\linewidth]{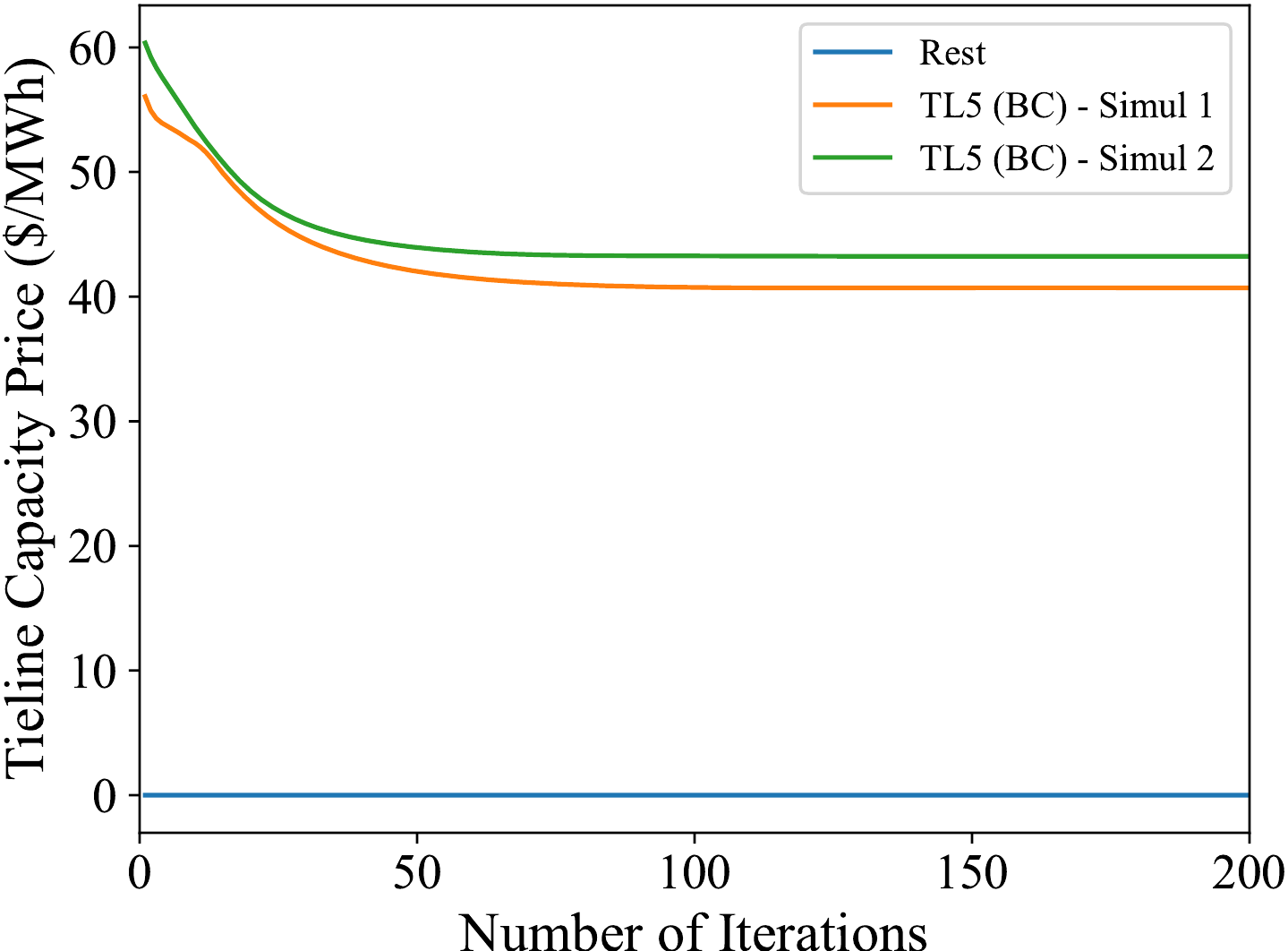}
       \caption{Capacity prices with ($\mbox{Simul}_2$) and without ($\mbox{Simul}_1$) tighter ramping rates}
    \label{fig:capPrice}
 \end{figure}

We use the IEEE Reliability Test System consisting of three zones, 73 buses, 115 transmission lines, 5 tie-lines, and 96 generators. See \cite{IEEE_RTS_1996} for the details of test system including the network topology, cost functions of the generators, and data for the nodal demands. We increase the capacity of all generators by 20\% to ensure the feasibility of this test case in the presence of net-demand uncertainty. The cost functions of the generators are extrapolated accordingly to reflect the extra capacities. 
We make the following changes to the testbed: 
\begin{itemize}
    \item In order to highlight the impact of (binding) tie-line capacity limits on capacity prices, we reduce the capacity of the tie-line between region B and C to 150MW.
    \item In order to highlight the impact of (binding) ramping constraints, we reduce the ramping rates of generators to one-third of their original values, because the total amount of the scheduled resources in this market is considerably smaller than that of the DA market, and the effects of ramping rates on prices manifest.
\end{itemize}

We assume that the real-time nodal net-demands of each region, i.e., $\mathbf{D}_a$, follow a normal distribution with the mean equal to DA forecast (represented by nodal demands given in \cite{IEEE_RTS_1996}) and the coefficient of variance equal to 6\%. 
Also, the confidence level of all areas are deemed 95\%, i.e., $t_a=0.05, \, a\in\mathcal{A}$.
We first schedule the generators for the DA load forecasts in the DA market \cite{KirchenBook}, and then use the proposed intra-day market to meet the reliability requirements of each region.
We assume that all the generators can participate in both the DA market and the intra-day market. The limitations of the slow-ramping generators to provide services in the intra-day market are reflected in their ramping rates.

Figs. \ref{fig:simul1}(a)-(c) show the nodal prices, price for reliability ($\gamma$), tie-line flows over the iterations. Initially, most of the nodal prices increase sharply before settling down to the optimal price values (Fig. \ref{fig:simul1}(a)). In contrast, the prices for reliability $\gamma_a$ decrease in the early stages of the iterations as the regions greedily try to optimize their market performances. As the number of iterations increases, the price for reliability gradually increases converging to a feasible optimal value (Fig. \ref{fig:simul1}(b)). We note that nodal prices tend to have larger swings over the iterations compared to the price for reliability. This is because prices for reliability measure the aggregate potential of a region to meet the changes in the chance constraint. The final values for $\{\gamma_a\}_{a\in\mathcal{A}}$ is smaller than the initial values, indicating that the trade between the regions leads to  reduced regional costs for reliability.


%

%

 The capacity price for the tie-line between regions B and C is nonzero, indicating that this line is congested (Fig. \ref{fig:capPrice}). We note that the capacity limits on the tie-lines are not hard coded in the proposed intraday model, rather it is guaranteed by the competition between the regions and the capacity prices. Hence, at the early steps of the simulation, we have higher flows on the tie-line between regions B and C than its actual capacity, resulting in higher capacity prices. As iterations continue, the mechanism for updating tie-line capacity prices in \eqref{eq::muUpd} decreases the flow on this line, and correspondingly decreases the capacity price. For the remaining tie-lines that are not congested, the  capacity prices are zero.

In order to see the effect of the ramping constraints on the market prices and intertie flows, we consider a second scenario in which the ramping rates of the generators are reduced by 25\%. In this scenario, the prices and tie-line flows follow a similar convergence path to the initial scenario (Fig. \ref{fig:simul2}). As shown in Figs. \ref{fig:capPrice}, \ref{subfig-1:gamma} and \ref{subfig-2:gamma}, the tighter ramping rates lead to increases in the final values of the capacity $\mu^a_{i,j}$ and reliability ($\gamma_a^\infty$) prices. 

 %
 %
 
%

\section{Conclusion}
\label{7_Conclus}

A decentralized market mechanism for flexibility margin trading is proposed. Regional market operators participate in this decentralized market to benefit from trading flexibility with neighboring market operators while maintaining privacy. In the proposed mechanism, areas engage in an iterative mechanism where neighboring regions share limited information on the tie-lines with each other and update their economic dispatch accordingly. Analytical results show the convergence of the proposed iterated mechanism to the Nash equilibrium of the intraday market coupling game. The IEEE-RTS-96 test system with three regional markets is used to perform the numerical experiments.  Numerical results validate the performance of the proposed decentralized mechanism.

\appendices

\section{The Proof of Lemma \ref{Lemma::MuConv}}
\label{App::LemmaMuConv}
Suppose the sequence $\{\mu_{(i,j)}^k : k>0\}$ does not converge. This means that one of the following two cases occur: either $(i)$, $\mu_{(i,j)}^k \rightarrow +\infty$, or $(ii)$, this sequence oscillates with $\lim \inf \mu_{(i,j)}^k < \lim \sup \mu_{(i,j)}^k$. Case $(i)$ is easily discarded since from the assumption	on maximum tie-line capacity price we have:
	\begin{align}
	\mu_{(i,j)}^k \rightarrow +\infty \Rightarrow |\Delta T_{a,(i,j)}^k| + |\Delta T_{a',(j,i)}^k| \rightarrow 0
	\end{align}
	which means that the sequence $\{\mu_{(i,j)}^k \}$ converges to a single number, c.f. \eqref{eq::muUpd}, which is a contradiction. Now let us consider case $(ii)$. Capacity price oscillation implies that for some $\epsilon > 0$
	\begin{align}
	& \lim \sup_{k} \{  \frac{|\Delta T_{a,(i,j)}^k| + |\Delta T_{a',(j,i)}^k|}{2} + |T^\mathrm{DA}_{a,(i,j)}| - \bar{T}_{a,(i,j)} \} \geq \epsilon \nonumber \\
	& \hspace{0pt} \geq \lim \inf_{k} \{    \frac{|\Delta T_{a,(i,j)}^k| + |\Delta T_{a',(j,i)}^k|}{2} + |T^\mathrm{DA}_{a,(i,j)}| - \bar{T}_{a,(i,j)}  \}
	\end{align}
	Note that if the sequence $\{\mu_{(i,j)}^k \}$ is oscillating, then the $\lim \inf_{k} \{  \frac{|\Delta T_{a,(i,j)}^k| + |\Delta T_{a',(j,i)}^k|}{2} + |T^\mathrm{DA}_{a,(i,j)}| - \bar{T}_{a,(i,j)}  \} \leq 0$, otherwise there exists a $k_0$ such that $\forall k \geq k_0,   \frac{|\Delta T_{a,(i,j)}^k| + |\Delta T_{a',(j,i)}^k|}{2} + |T^\mathrm{DA}_{a,(i,j)}| - \bar{T}_{a,(i,j)}  > 0$, which from  \eqref{eq::muUpd} it means that $\mu_{(i,j)}^k \rightarrow +\infty$ which is in contradiction with the assumption of the sequence $\{\mu_{(i,j)}^k\}$ is oscillating.
	\\
	Oscillation of the sequence $\{\mu_{(i,j)}^k\}$  means that the sequence $\{ \frac{|\Delta T_{a,(i,j)}^k| + |\Delta T_{a',(j,i)}^k|}{2} + |T^\mathrm{DA}_{a,(i,j)}| - \bar{T}_{a,(i,j)} \}$ exhibits infinitely many upcrossings of $\frac{\epsilon}{4}$ and $\frac{3\epsilon}{4}$. Let $k > 0$ denote an index for an iteration in which an upcrossing of $\frac{\epsilon}{4}$ takes place, i.e.:
	\begin{align}
	& \frac{|\Delta T_{a,(i,j)}^k| + |\Delta T_{a',(j,i)}^k|}{2} + |T^\mathrm{DA}_{a,(i,j)}| - \bar{T}_{a,(i,j)} \geq \frac{\epsilon}{4} \mbox{, and } \nonumber \\
	& \frac{|\Delta T_{a,(i,j)}^{k-1}| + |\Delta T_{a',(j,i)}^{k-1}|}{2} + |T^\mathrm{DA}_{a,(i,j)}| - \bar{T}_{a,(i,j)} < \frac{\epsilon}{4}
	\end{align}
	and let the $\tau(k)$ denote minimum number of steps that require to upcross $\frac{3\epsilon}{4}$, meaning that
	\begin{align}
	& \frac{|\Delta T_{a,(i,j)}^{k+\tau(k)}| + |\Delta T_{a',(j,i)}^{k+\tau(k)}|}{2} + |T^\mathrm{DA}_{a,(i,j)}| - \bar{T}_{a,(i,j)} \geq \frac{3\epsilon}{4}, \nn \\
	& \mbox{and } \nn \\
	& \frac{|\Delta T_{a,(i,j)}^{k+\tau(k)-1}| + |\Delta T_{a',(j,i)}^{k+\tau(k)-1}|}{2} + |T^\mathrm{DA}_{a,(i,j)}| - \bar{T}_{a,(i,j)} < \frac{3\epsilon}{4} \nn
	\end{align}
We are going to reject the assumption that the sequence $\{\mu_{(i,j)}^k\}$ is oscillating as follows: First, we show that $\tau(k)$ is an increasing function of $k$. Then we show that if $\tau(k)$ would be large enough, then the capacity price $\mu_{(i,j)}^{k+l}$ would be large enough (for some $0 \leq l \leq \tau(k)$) to make $\widehat{\Delta T}_{a,(i,j)}^{k}$ zero, which means that the sequence $\{  \frac{|\Delta T_{a,(i,j)}^k| + |\Delta T_{a',(j,i)}^k|}{2} + |T^\mathrm{DA}_{a,(i,j)}| - \bar{T}_{a,(i,j)} \}$ would not upcross $\frac{3\epsilon}{4}$.
	Since $\Delta T_{a,(i,j)}^{k+1} - \Delta T_{a,(i,j)}^k = \rho_k (\Delta \widehat{T}_{a,(i,j)}^{k+1} - \Delta T_{a,(i,j)}^k)$, it follows that
	\begin{align}
	&\frac{3\epsilon}{4} - \frac{\epsilon}{4} \leq \Delta T_{a,(i,j)}^{k+\tau (k)} - \Delta T_{a,(i,j)}^k \nn \\
	& \hspace{30pt} = \sum_{l=1}^{\tau (k)} (\Delta T_{a,(i,j)}^{k+l} - \Delta T_{a,(i,j)}^{k+l-1}) \nonumber \\
	& \hspace{30pt} = \sum_{l=1}^{\tau (k)} \rho_{k+l} (\Delta \widehat{T}_{a,(i,j)}^{k+l} - \Delta T_{a,(i,j)}^{k+l}) \nn  \leq 2 U_{a,(i,j)} \sum_{l=1}^{\tau (k)} \rho_{k+l} 
	\end{align}
	where $U_{a,(i,j)}$ is a maximum cap on the values of $\{\Delta T_{a,(i,j)}^k\}, \forall a \in \mc{A}, \forall (i,j) \in \mc{T}_a, \forall k$. One candidate for this maximum cap is the total load of the markets. Since $\{\rho_k\}$ is a decreasing sequence, we have $ \sum_{l=1}^{\tau (k)} \rho_{k+l} < \tau (k) \cdot \rho_k$. Hence, 
	\begin{align}
	\frac{\epsilon}{2} \leq 2 U_{a,(i,j)} \sum_{l=1}^{\tau (k)} \rho_{k+l}  \hspace{0pt} < 2 U_{a,(i,j)} \cdot \tau (k) \cdot \rho_k
	\end{align}
	which means that $\tau (k) > \frac{\epsilon}{4 \rho_k \cdot U_{a,(i,j)}}$. This gives a lower bound on the $\tau (k)$. We note that since $\rho_k \rightarrow 0^+$, $\frac{\epsilon}{4 \rho_k \cdot (\bar{T} - |T_{a,(i,j)}|)} \rightarrow +\infty$, which means that $\tau (k) \rightarrow +\infty$. This ends the first part of the proof. Now, note that from \eqref{eq::muUpd} we have that
	\begin{align} \label{eq::muLower}
	\mu_{(i,j)}^{k+z} \geq \mu_{(i,j)}^k + \beta z \frac{\epsilon}{4}, \hspace{10pt} 0 \leq z \leq \tau (k)
	\end{align}
	The intuition behind \eqref{eq::muLower} is that $ \frac{\epsilon}{4} \leq  \frac{|\Delta T_{a,(i,j)}^k| + |\Delta T_{a',(j,i)}^k|}{2} + |T^\mathrm{DA}_{a,(i,j)}| - \bar{T}_{a,(i,j)} \leq \frac{3\epsilon}{4}, \forall z \mbox{ s.t. } 0 \leq z \leq \tau (k)$. Hence, for each iteration, the value of $\mu_{(i,j)}$ increases at least $\beta \frac{\epsilon}{4}$. (cf. \eqref{eq::muUpd}). So \eqref{eq::muLower} implies that if $k$ is large enough, then $\tau (k)$ would be large enough such that for some $0 < \upsilon < \tau (k)$ we would have $\mu_{(i,j)}^{k+\upsilon} > \bar{\mu}_{(i,j)}$, which means that $\widehat{\Delta T}_{a, (i,j)}^{k + l} = 0$ for all $\upsilon < l \leq \tau (k)$ (cf. Assumption \ref{assumption}). Consequently, from \eqref{eq::upT} we have that $|\Delta T_{a,(i,j)}^{k + l}| \leq |\Delta T_{a,(i,j)}^{k + \upsilon}|$ for all $\upsilon < l \leq \tau (k)$. This means that $\frac{|\Delta T_{a,(i,j)}^k| + |\Delta T_{a',(j,i)}^k|}{2} + |T^\mathrm{DA}_{a,(i,j)}| - \bar{T}_{a,(i,j)}$ will not upcross $\frac{3\epsilon}{4}$, which is in contradiction with our assumption. 

\section{The Proof of Theorem \ref{Theorem::XConv}}
\label{App::LemmaXConv}
\subsection{Preliminaries}
We first state and prove a result that will be used in the proof of Theorem 1.
\begin{lemma} \label{Lemma::Cconverg}
Let $\{y_k:k\geq 0\}$, $\{\rho_k:k\geq 0\}$ and $\{b_k:k \geq 0\}$ be three sequences such that:
\begin{align}
    & y_{k+1} \leq (1-\rho_k) y_k + b_k, & \label{eq::LemmaConvEq1} 
\end{align}
where $y_k \geq0, \rho_k\in (0,1)$ and $\sum_{k} \rho_k = \infty, \sum_{k} b_k < \infty$. It follows that $y_k \rightarrow 0$.
\begin{proof}
From \eqref{eq::LemmaConvEq1} we have:
\begin{align*}
    -y_0 \leq \sum_{k= 0}^T (y_{k+1} - y_{k}) & \leq -\sum_{k = 0}^T \rho_k y_k + \sum_{k = 0}^T b_k
\end{align*}
Since $\sum_{k} b_k < \infty$, the previous inequality implies $$\sum_{k} \rho_k y_k < \infty.$$ Finally, the condition $\sum_{k} \rho_k = \infty$ implies $y_k \rightarrow 0$.
\end{proof}
\end{lemma}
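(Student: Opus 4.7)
The plan is to prove $y_k \to 0$ by unrolling the linear recursion explicitly and then bounding the resulting expression using the two standing hypotheses $\sum_k \rho_k = \infty$ and $\sum_k b_k < \infty$. Iterating the inequality $y_{k+1} \leq (1-\rho_k) y_k + b_k$ gives
\begin{equation*}
y_{k+1} \;\leq\; P_k\, y_0 \;+\; \sum_{j=0}^{k} \frac{P_k}{P_j}\, b_j, \qquad P_k := \prod_{j=0}^{k}(1-\rho_j),
\end{equation*}
with the convention $P_{-1}=1$. Since $\rho_j \in (0,1)$, all factors are positive and the usual inequality $1-\rho_j \leq e^{-\rho_j}$ gives $P_k \leq \exp\!\bigl(-\sum_{j=0}^{k}\rho_j\bigr)$. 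The divergence hypothesis $\sum_k \rho_k = \infty$ therefore forces $P_k \to 0$, which disposes of the first (``homogeneous'') term.

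The main work is to show that the second term $S_k := \sum_{j=0}^{k} \frac{P_k}{P_j} b_j \to 0$. I would handle this by a split-the-sum argument: for any $\varepsilon > 0$, pick $N$ large enough that $\sum_{j > N} b_j < \varepsilon/2$, which is possible because $\sum_j b_j < \infty$. Then write $S_k$ as a head $\sum_{j=0}^{N} \frac{P_k}{P_j} b_j$ plus a tail $\sum_{j=N+1}^{k} \frac{P_k}{P_j} b_j$. The head is a finite sum in which $N$ is fixed and each coefficient $P_k/P_j \to 0$ as $k \to \infty$, so the head vanishes for $k$ large. For the tail, observe that $\frac{P_k}{P_j} = \prod_{i=j+1}^{k}(1-\rho_i) \leq 1$, so the tail is bounded above by $\sum_{j > N} b_j < \varepsilon/2$ uniformly in $k$. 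Combining, $\limsup_k S_k \leq \varepsilon/2$, and since $\varepsilon$ was arbitrary, $S_k \to 0$.

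Putting the two pieces together yields $\limsup_k y_{k+1} \leq 0$, and since $y_k \geq 0$ for all $k$, this gives $y_k \to 0$. The anticipated main obstacle is the split-the-sum step: one must be careful to fix $N$ first (using summability of $\{b_j\}$) and only then let $k \to \infty$ on the head, since a naive interchange of limit and summation is not available here. The uniform bound $P_k/P_j \leq 1$ and the crucial use of $P_k/P_j \to 0$ for each fixed $j$ (which itself requires $\sum_k \rho_k = \infty$) are the two ingredients that make the split work.
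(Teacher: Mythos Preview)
Your proof is correct and takes a genuinely different route from the paper's. The paper telescopes the inequality to obtain
\[
\sum_{k=0}^{T} \rho_k y_k \;\leq\; y_0 + \sum_{k=0}^{T} b_k,
\]
hence $\sum_k \rho_k y_k < \infty$, and then invokes $\sum_k \rho_k = \infty$ to conclude $y_k \to 0$. Strictly speaking that last step only yields $\liminf_k y_k = 0$; upgrading to a full limit requires the almost-monotonicity $y_{k+1} \leq y_k + b_k$ together with $\sum_k b_k < \infty$, a step the paper leaves implicit. Your unrolling-plus-head/tail-split argument is more explicit and self-contained, with no such gap to fill. The paper's argument, once completed, is shorter and yields the extra byproduct $\sum_k \rho_k y_k < \infty$, which can be useful for rate estimates. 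Both proofs implicitly rely on $b_k \geq 0$ (true in the paper's applications, where $b_k$ is a sum of norms): you need it for the tail bound $\tfrac{P_k}{P_j} b_j \leq b_j$, and the paper needs it so that ``$\sum_k b_k < \infty$'' is a genuine finiteness statement.
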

\subsection{Proof: (a) Convergence}
We show the sequence $\{\bm{x}_a^k\}$ is \textit{Cauchy} (and hence it has a limit say $\bm{x}_a^{\infty}$). From the update rule in \eqref{eq::upT} - \eqref{eq::upDlta}, we have:
\begin{align} \label{eq::Xkp1minusXk}
    &\bm{x}_a^{k+1} - \bm{x}_a^k = (1 - \rho_{k+1}) \bm{x}_a^k + \rho_{k+1} \bm{\hat{x}}_a^{k+1} \nn \\
    & \hspace{115pt} - (1 - \rho_k) \bm{x}_a^{k-1} - \rho_k \bm{\hat{x}}_a^k \nn \\
    & \hspace{46pt} = (1-\rho_k) (\bm{x}_a^k - \bm{x}_a^{k-1}) + \rho_k (\bm{\hat{x}}_a^{k+1} - \bm{\hat{x}}_a^k) \nn \\ 
    & \hspace{107pt} + (\rho_{k+1} - \rho_k) (\bm{\hat{x}}_a^{k+1} - \bm{x}_a^k)
\end{align}
As stated in Remark II.1, the primal and dual solutions to market clearing problem are Lipschitz continuous 
\cite{LipschitzPinho11, Robinson80}. Thus, there exists a $L_a > 0$ such that:
\begin{align}
    & || \bm{\hat{x}}_a^{k+1} - \bm{\hat{x}}_a^k || \leq \sum_{a' \in \mc{A}(a)} L_a || \bm{x}_{a'}^{k+1} - \bm{x}_{a'}^k || + L_a || \bm{\mu}_a^k - \bm{\mu}_a^{k-1} || \nn \\
    & \hspace{90pt} = \mc{O}(\rho_k) + L_a || \bm{\mu}_a^k - \bm{\mu}_a^{k-1} ||
\end{align}
We now show that $\norm{\bm{x}_a^{k+1} - \bm{x}_a^k} \rightarrow 0$. From \eqref{eq::Xkp1minusXk} we have:
\begin{align}
    & \norm{\bm{x}_a^{k+1} - \bm{x}_a^k} \leq (1-\rho_k) \norm{\bm{x}_a^k - \bm{x}_a^{k-1}} + \rho_k \norm{\bm{\hat{x}}_a^{k+1} - \bm{\hat{x}}_a^k} \nn \\ 
    & \hspace{103pt} + (\rho_{k+1} - \rho_k) \norm{\bm{\hat{x}}_a^{k+1} - \bm{x}_a^k}
\label{inequality}    
\end{align}
Hence, \eqref{inequality} can be written as $y_{k+1} \leq (1-\rho_k) y_k + b_k$ with $y_k  \triangleq \norm{\bm{x}_a^k - \bm{x}_a^{k-1}}$ and 
$$b_k \triangleq \rho_k \norm{\bm{\hat{x}}_a^{k+1} - \bm{\hat{x}}_a^k} + (\rho_{k+1} - \rho_k) \norm{\bm{\hat{x}}_a^{k+1} - \bm{x}_a^k}.$$ From Lemma \ref{Lemma::Cconverg}, we have $\norm{\bm{x}_a^{k+1} - \bm{x}_a^k} \rightarrow 0$. We now show $\norm{\bm{x}_a^{k+2} - \bm{x}_a^k} \rightarrow 0$. As before we write:
\begin{align}
& \bm{x}_{a}^{k+2}-\bm{x}_{a}^{k+1}=(1-\rho _{k+1})(%
\bm{x}_{a}^{k+1}-\bm{x}_{a}^{k}) \nn \\
& \hspace{20pt} +\rho _{k+1}(\bm{%
\hat{x}}_{a}^{k+2}-\bm{\hat{x}}_{a}^{k+1}) \nn 
 + (\rho _{k+2}-\rho _{k+1})(%
\bm{\hat{x}}_{a}^{k+2}-\bm{x}_{a}^{k+1})
\end{align}

Then%
\begin{align}
&\bm{x}_{a}^{k+2}-\bm{x}_{a}^{k} = (\bm{x}_{a}^{k+2}-\bm{x}_{a}^{k+1}) + (\bm{x}_{a}^{k+1}-%
\bm{x}_{a}^{k}) \nn \\
&\hspace{10pt} = (1-\rho _{k+1})(\bm{x}_{a}^{k+1}-\bm{x}_{a}^{k})+\rho
_{k+1}(\bm{\hat{x}}_{a}^{k+2}-\bm{\hat{x}}_{a}^{k+1}) \nn \\
& \hspace{12pt} +(\rho_{k+2}-\rho _{k+1})(\bm{\hat{x}}_{a}^{k+2}-\bm{x}_{a}^{k+1}) 
+ (1-\rho _{k})(\bm{x}_{a}^{k}-\bm{x}_{a}^{k-1})\nn\\
&\hspace{12pt} +\rho _{k}(%
\bm{\hat{x}}_{a}^{k+1}-\bm{\hat{x}}_{a}^{k})   +(\rho_{k+1}-\rho _{k})(\bm{\hat{x}}_{a}^{k+1}-\bm{x}_{a}^{k}) \nn \\
&\hspace{10pt} = (1-\rho _{k})(\bm{x}_{a}^{k+1}-\bm{x}_{a}^{k-1})+\rho
_{k+1}(\bm{\hat{x}}_{a}^{k+2}-\bm{\hat{x}}_{a}^{k+1}) \nn \\
&\hspace{10pt} +(\rho_{k+2}-\rho _{k+1})(\bm{\hat{x}}_{a}^{k+2}-\bm{x}_{a}^{k+1}) + \rho_{k}(\bm{\hat{x}}_{a}^{k+1}-\bm{\hat{x}}%
_{a}^{k}) \nn \\
& \hspace{10pt} +(\rho _{k+1}-\rho _{k})(\bm{\hat{x}}_{a}^{k+1}-\bm{%
x}_{a}^{k})+(\rho _{k}-\rho _{k+1})(\bm{x}_{a}^{k+1}-\bm{x}%
_{a}^{k})
\end{align}%
From which we obtain%
\begin{align}
\left\Vert \bm{x}_{a}^{k+2}-\bm{x}_{a}^{k}\right\Vert \leq
(1-\rho _{k})\left\Vert \bm{x}_{a}^{k+1}-\bm{x}%
_{a}^{k-1}\right\Vert +b_{k} \label{eq::bk2nd}
\end{align}%
where
\begin{align} 
b_{k} \triangleq& \rho _{k+1}\left\Vert \bm{\hat{x}}_{a}^{k+2}-\bm{%
\hat{x}}_{a}^{k+1}\right\Vert + \rho _{k}\left\Vert \bm{\hat{x}}_{a}^{k+1}-\bm{\hat{x}}%
_{a}^{k}\right\Vert \nn \\
& +(\rho _{k+2}-\rho _{k+1})\left\Vert 
\bm{\hat{x}}_{a}^{k+2}-\bm{x}_{a}^{k+1}\right\Vert  \nn \\
& +(\rho _{k+1}-\rho _{k})\left\Vert \bm{\hat{x}}%
_{a}^{k+1}-\bm{x}_{a}^{k}\right\Vert \nn +(\rho _{k}-\rho
_{k+1})\left\Vert \bm{x}_{a}^{k+1}-\bm{x}_{a}^{k}\right\Vert
\end{align}%
Let $y_k  \triangleq \{\norm{\bm{x}_a^{k+1} - \bm{x}_a^{k-1}}\}$. Here again, using Lemma \ref{Lemma::Cconverg} we have $y_k =\norm{\bm{x}_{a}^{k+2}-\bm{x}_{a}^{k}} \rightarrow 0$. A similar argument is used to
show $\norm{\bm{x}_{a}^{k+T}-\bm{x}_{a}^{k}}
\rightarrow 0$ for any $T<\infty $.

\subsection{Proof: (b) Limit is Nash Equilibrium}
As  $\boldsymbol{\theta}^{k}_{-a}
\rightarrow \boldsymbol{\theta}^{\infty}_{-a}$, $\boldsymbol{\delta}^{k}_{-a} \rightarrow \boldsymbol{\delta}^{\infty}_{-a}$ and 
$\boldsymbol{\mu}^{k} \rightarrow \boldsymbol{\mu}^{\infty}$, the Lipschitz continuity of the primal solution to market clearing problem implies the sequence $\{\bm{\hat{x}}_a^k:k>0\}$ also has a limit say $\bm{\hat{x}}_a^{\infty}$. 
\\
Further, since
$$
\bm{x}_a^k-\bm{\hat{x}}_a^{\infty} =
(1-\rho_k)(\bm{x}_a^{k-1}-\bm{\hat{x}}_a^{\infty})
+
\rho_k(\bm{\hat{x}}_a^{k}-\bm{\hat{x}}_a^{\infty}),
$$
it follows that 
$$
||\bm{x}_a^k-\bm{\hat{x}}_a^{\infty}|| \leq
(1-\rho_k)||\bm{x}_a^{k-1}-\bm{\hat{x}}_a^{\infty}||
+
\rho_k||\bm{\hat{x}}_a^{k}-\bm{\hat{x}}_a^{\infty}||
$$
and by Lemma 3 it follows that $||\bm{x}_a^k-\bm{\hat{x}}_a^{\infty}|| \rightarrow 0$ or equivalently $\bm{x}_a^{\infty}= \bm{\hat{x}}_a^{\infty}$.

Let $\mathcal{L}_a$ denote the Lagrangian of the market clearing problem in $(1)-(8)$. The clearing solution at market $a$ at iteration $k$ satisfies:
\begin{align*}
& \mathcal{L}_a(
   \widehat{\boldsymbol{\Delta} \mathbf{P}}^{k}_a, 
   \widehat{\boldsymbol{\Delta} \mathbf{T}}^{k}_a,  
   \widehat{\boldsymbol{\theta}}^{k}_{a};  \boldsymbol{\delta}^{k-1}_{-a},\boldsymbol{\mu}^{k-1} )  \\
   & \hspace{24pt} \leq  
  \mathcal{L}_a(
   \boldsymbol{\Delta} \mathbf{P}_a, \boldsymbol{\Delta} \mathbf{T}_a,  \boldsymbol{\theta}_{a};  \boldsymbol{\delta}^{k-1}_{-a},\boldsymbol{\mu}^{k-1} ) 
\end{align*}
Since complementary slackness holds in the limit we obtain:
\begin{align*}
& V_a(
   \widehat{\boldsymbol{\Delta} \mathbf{P}}^{\infty}_a, 
   \widehat{\boldsymbol{\Delta} \mathbf{T}}^{\infty}_a,  
   \widehat{\boldsymbol{\theta}}^{\infty}_{a};  \boldsymbol{\delta}^{\infty}_{-a},\boldsymbol{\mu}^{\infty} )  \\
   & \hspace{24pt}\leq  
  V_a(
   \boldsymbol{\Delta} \mathbf{P}_a, \boldsymbol{\Delta} \mathbf{T}_a,  \boldsymbol{\theta}_{a};  \boldsymbol{\delta}^{\infty}_{-a},\boldsymbol{\mu}^{\infty} ) 
\end{align*}
for any feasible solution $(\boldsymbol{\Delta} \mathbf{P}_a, \boldsymbol{\Delta} \mathbf{T}_a,  \boldsymbol{\theta}_{a})$ with angles $\boldsymbol{\theta}_{-a}^{\infty}$.

\vspace{-6pt}
\section{The Proof of Proposition \ref{Prop::Proposition_1}}

We start by writing the KKT optimality conditions for both the centralized OPF and the decentralized market. 

The first order conditions with respect to $\theta_{a,i}$, $\theta_{a',j}$ and $\Delta T_{a,(i,j)}$ for tie-line $(i,j) \in \mc{T}_a$ in the centralized DC-OPF problem in \eqref{eq::cen_obj_fun}-\eqref{eq::cen_theta_1} are:
\begin{align}
    & \sum_{k|(i,k) \in \mc{H}_a} \big\{ \frac{1}{x_{a,(i,k)}} \cdot \big( \alpha_{a,i}^{\ast} - \alpha_{a,k}^{\ast} -\kappa_{a,(i,k)}^{\ast} + \eta_{a,(i,k)}^{\ast} \big) \big\} \nn \\
    & \hspace{127pt} - \frac{1}{\bar{x}_{a,(i,j)}} \cdot \xi_{a,(i,j)}^{\ast} = 0, \label{eq::KKTcentrThetI} \\
    & \sum_{k|(j,k) \in \mc{H}_{a'}} \big\{ \frac{1}{x_{a',(j,k)}} \cdot \big( \alpha_{a',j}^{\ast} - \alpha_{a',k}^{\ast} - \kappa_{a',(j,k)}^{\ast} + \eta_{a',(j,k)}^{\ast} \big) \big\} \nn \\
    & \hspace{127pt} + \frac{1}{\bar{x}_{a,(i,j)}} \cdot \xi_{a,(i,j)}^{\ast} = 0, \label{eq::KKTcentrThetJ} \\
    & \alpha_{a,i}^{\ast} - \alpha_{a',j}^{\ast} + \xi_{a,(i,j)}^{\ast} - \bar{\kappa}_{a,(i,j)}^{\ast} + \bar{\eta}_{a,(i,j)}^{\ast} 
    + \gamma_a^{\ast} - \gamma_{a'}^{\ast} = 0 \label{eq::KKTcentrT}.
\end{align}
Similarly, the first order conditions with respect to $\theta_{a,i}$ and $\Delta T_{a,(i,j)}$ for tie-line $(i,j) \in \mc{T}_a$ in the decentralized DC-OPF problem are:
\begin{align}
    & \sum_{k|(i,k) \in \mc{H}_a} \big\{ \frac{1}{x_{a,(i,k)}} \cdot \big( \hat{\alpha}_{a,i} - \hat{\alpha}_{a,k} - \hat{\kappa}_{a,(i,k)} + \hat{\eta}_{a,(i,k)} \big) \big\} \nn \\
    & \hspace{127pt} - \frac{1}{\bar{x}_{a,(i,j)}} \cdot \hat{\xi}_{a,(i,j)} = 0, \\
    & \hat{\alpha}_{a,i} + \hat{\xi}_{a,(i,j)} - \alpha_{a',j}^{\ast} - \bar{\kappa}_{a,(i,j)}^{\ast} + \bar{\eta}_{a,(i,j)}^{\ast} 
    + \hat{\gamma}_a - \gamma_{a'}^{\ast} = 0.
\end{align}
And the first order conditions with respect to $\theta_{a',j}$ and $\Delta T_{a',(j,i)}$ for tie-line $(j,i) \in \mc{T}_{a'}$ in the decentralized DC-OPF problem are:
\begin{align}
    & \sum_{k|(j,k) \in \mc{H}_{a'}} \big\{ \frac{1}{x_{a',(j,k)}} \cdot \big( \hat{\alpha}_{a',j} - \hat{\alpha}_{a',k} - \hat{\kappa}_{a',(j,k)} + \hat{\eta}_{a',(j,k)} \big) \big\} \nn \\
    & \hspace{124pt} - \frac{1}{\bar{x}_{a,(i,j)}} \cdot \hat{\xi}_{a',(j,i)} = 0, \\
    & \hat{\alpha}_{a',j} + \hat{\xi}_{a',(j,i)} - \alpha_{a,i}^{\ast} - \bar{\kappa}_{a',(i,j)}^{\ast} + \bar{\eta}_{a',(i,j)}^{\ast} 
    + \hat{\gamma}_{a'} - \gamma_{a}^{\ast} = 0.
\end{align}
We note that $\bar{\kappa}_{a',(i,j)}^{\ast}$ and $\bar{\eta}_{a',(i,j)}^{\ast}$ are the values (at the optimum) of the auxiliary dual variables for a different formulation of the centralized OPF in \eqref{eq::cen_obj_fun}-\eqref{eq::cen_theta_1} where we define the flow on the tie-line $(i,j)$ in reverse, \textit{i.e.} from area $a'$ to area $a$. One can immediately see that $\bar{\kappa}_{a',(i,j)}^{\ast} = -\bar{\kappa}_{a,(i,j)}^{\ast}$ and $\bar{\eta}_{a',(i,j)}^{\ast} = -\bar{\eta}_{a,(i,j)}^{\ast}$.
Now the following set of values for the parameters is the solution to the decentralized OPF, which are the same as the one for the centralized OPF:

\begin{tabular}{ p{0.25\textwidth} p{0.25\textwidth} }
$\hat{\alpha}_{a,i} = \alpha_{a,i}^{\ast},$ & $\hat{\alpha}_{a',j} = \alpha_{a',j}^{\ast},$ \\
$\Delta \hat{T}_{a,(i,j)}  = \Delta T_{a,(i,j)}^{\ast},$ & $\Delta \hat{T}_{a',(j,i)} = -\Delta T_{a,(i,j)}^{\ast},$ \\
$\hat{\xi}_{a,(i,j)} = \xi_{a,(i,j)}^{\ast},$ & $\hat{\xi}_{a',(j,i)} = -\xi_{a,(i,j)}^{\ast}$.
\end{tabular}
Hence, with the values of the transmission rates defined in \eqref{eq::tran_rates}, the market outcome of the decentralized intraday market in \eqref{eq::decen_obj_fun}-\eqref{eq::decen_theta_1} would be the same as the outcome of the centralized economic dispatch in \eqref{eq::cen_obj_fun}-\eqref{eq::cen_theta_1}.

\label{App::Proposition_1}

\vspace{-6pt}
\section{The Proof of Lemma \ref{lemma::Xfeasible}}
\label{App::Xfeasible}

Note that the solutions to the decentralized optimization problems in \eqref{eq::decen_obj_fun} - \eqref{eq::decen_theta_1} for each $a \in \mathcal{A}$ satisfy all the primal conditions of the centralized optimization in \eqref{eq::cen_obj_fun} - \eqref{eq::cen_theta_1} {\em except} the constraint \eqref{eq::TcenMarkIneq}. 
Here we show that the ultimate flows of the tie-lines, {i.e.}, $\Delta T_{a,(i,j)}^{\infty}$, derived from the iterative implementation of the decentralized mechanism also satisfy the constraint \eqref{eq::TcenMarkIneq}. 
From the fact that the sequence $\{\bm{x}_a^{k}\}$ converges, we know that
$$\Delta T_{a,(i,j)}^{\infty} = - \Delta T_{a',(j,i)}^{\infty} = \frac{\theta_{a,i}^{\infty}-\theta_{a',j}^{\infty}}{\bar{x}_{a,(i,j)}} - T^\mathrm{DA}_{a,(i,j)}.$$ Also, from the fact that the sequence $\{\mu_{(i,j)}^k\}$ converges we must have (cf. \eqref{eq::muUpd}):
\begin{align} \label{eq::boundT1}
    \frac{|\Delta T_{a,(i,j)}^{\infty}| + |\Delta T_{a',(j,i)}^{\infty}|}{2} + |T^\mathrm{DA}_{a,(i,j)}| - \bar{T}_{a,(i,j)} \leq 0
\end{align}
and since $\Delta T_{a,(i,j)}^{\infty} = - \Delta T_{a',(j,i)}^{\infty}$, \eqref{eq::boundT1} results in:
\begin{align}
 - \bar{T}_{a,(i,j)} \leq \Delta T_{a,(i,j)}^{\infty} + T^\mathrm{DA}_{a,(i,j)} \leq \bar{T}_{a,(i,j)}
\end{align}
and identical relation also holds for $a'$.

\vspace{-6pt}
\section{The Proof of Theorem \ref{theorem::KKtEquiv}}
\label{App::KKtEquiv}

We prove Theorem \ref{theorem::KKtEquiv} for the special case of two regions with one tie-line in between them, and the proof can be extended to any number of regions and tie-lines. We denote these two regions with $a$ and $a'$.
Now we are going to compare the set of KKT optimality conditions for the centralized OPF in \eqref{eq::cen_obj_fun}-\eqref{eq::cen_theta_1}, denote this set with $\mc{S}_c$, versus the aggregates set of KKT conditions of the decentralized OPF in \eqref{eq::decen_obj_fun}-\eqref{eq::decen_theta_1} for regions $a$ and $a'$ (we denote this set of aggregated KKT optimality equations with $S_d$). Now one can easily verify that the set of equations in $S_c$ and $S_d$ are the same with two differences:
\begin{itemize}
    \item[$1$:] The equations on $S_c$ has three extra equation corresponding to the primal feasibility of \eqref{eq::TcenMarkIneq}. We know the final values derived from the proposed iterative mechanism satisfy \eqref{eq::TcenMarkIneq} (Lemma \ref{lemma::Xfeasible}). Hence, we can add the corresponding primal feasibility conditions of \eqref{eq::TcenMarkIneq} to $S_d$.
    \item[$2$:] After updating $S_d$ in step 1, the only difference between $S_c$ and $S_d$ is because of the derivatives of the Lagrangian function w.r.t. the flow on the tie-line. Let's assume that the direction of the flow of the tie-line is from region $a$ to region $a'$. In the $S_c$ we have one equation for this:
    \begin{align}
        & \frac{\partial \mc{L}_0}{\partial \Delta T_{a,(i,j)}} = \label{eq::DervLagC}  \\
        & \alpha_{a,i}^{\ast} - \alpha_{a',j}^{\ast} + \xi_{a,(i,j)}^{\ast} - \bar{\kappa}_{a,(i,j)}^{\ast} + \bar{\eta}_{a,(i,j)}^{\ast} 
    + \gamma_a^{\ast} - \gamma_{a'}^{\ast} = 0 \nn
    \end{align}
    and because we have two variables in the decentralized OPF for the flow on this tie-line, {i.e.}, $T_{a,(i,j)}$ and $T_{a',(j,i)}$, we have two equations corresponding to the derivative of the Lagrangian function w.r.t. tie-line flow: for area $a$,    
    \begin{align}
        & \frac{\partial \mc{L}_a}{\partial \Delta T_{a,(i,j)}} = \hat{\alpha}_{a,i}^{\infty} + \hat{\xi}_{a,(i,j)}^{\infty} + \hat{\gamma}_a^{\infty} - \delta_{a',(i,j)}^{\infty} \nn \\
        & \hspace{47pt} + \frac{\mu_{(i,j)}^{\infty}}{2} \mbox{sign} (\Delta T_{a, (i,j)}^{\infty}) \nn \\
        & \hspace{29pt} = \hat{\alpha}_{a,i}^{\infty} + \hat{\xi}_{a,(i,j)}^{\infty} + \hat{\gamma}_a^{\infty}
        - \hat{\alpha}_{a',j}^{\infty} \nn \\
        & \hspace{47pt} - \hat{\gamma}_{a'}^{\infty} + \frac{\mu_{(i,j)}^{\infty}}{2} \mbox{sign} (\Delta T_{a, (i,j)}^{\infty}) = 0   \label{eq::DervLaga}
    \end{align}
    and similarly for area $a'$,
    \begin{align}
        & \frac{\partial \mc{L}_{a'}}{\partial \Delta T_{a',(j,i)}} 
         = \hat{\alpha}_{a',j}^{\infty} + \hat{\xi}_{a',(j,i)}^{\infty} + \hat{\gamma}_{a'}^{\infty} - \hat{\alpha}_{a,i}^{\infty} \nn \\
        & \hspace{47pt} - \hat{\gamma}_{a}^{\infty} - \frac{\mu_{(i,j)}^{\infty}}{2} \mbox{sign} (\Delta T_{a,(i,j)}^{\infty}) = 0  \label{eq::DervLagaprime}
    \end{align}
\end{itemize}
where the equality in \eqref{eq::DervLaga} and \eqref{eq::DervLagaprime} come from replacing $\delta_{a, (j,i)}^{\infty}$ and $\delta_{a',(i,j)}^{\infty}$ with their final values (cf. Section \ref{5_Converg}, Item 2).
Note that from \eqref{eq::DervLaga} and \eqref{eq::DervLagaprime} we have $\hat{\xi}_{a',(i,j)}^{\infty} + \hat{\xi}_{a,(j,i)}^{\infty} = 0$, which means that $\hat{\xi}_{a',(i,j)}^{\infty} = - \hat{\xi}_{a,(j,i)}^{\infty}$. Hence, $\frac{\partial \mc{L}_{a'}}{\partial \Delta T_{a',(j,i)}} = -\frac{\partial \mc{L}_a}{\partial \Delta T_{a,(i,j)}}$. This means that the set of equations in $\mc{S}_c$ and $\mc{S}_d$ are equivalent except that $\mc{S}_c$ has equation \eqref{eq::DervLagC} and $\mc{S}_d$ has equation \eqref{eq::DervLaga}. 
Now we show that from any solution to the set of equations in $\mc{S}_d$, which is the KKT conditions of the decentralized OPF in \eqref{eq::decen_obj_fun}-\eqref{eq::decen_theta_1}, we can build the solution to the set of equations in $\mc{S}_c$, which is the KKT conditions of the centralized OPF in \eqref{eq::cen_obj_fun}-\eqref{eq::cen_theta_1}. The corresponding values of the variables $\bar{\kappa}_{a,(i,j)}^{\star}$ and $\bar{\eta}_{a,(i,j)}^{\star}$ for the solution of $\mc{S}_c$ would be:
\begin{align}
    \bar{\kappa}_{a,(i,j)}^{\star} &\triangleq \min( \frac{\mu_{(i,j)}^{\infty}}{2} \mbox{sign} (\Delta T_{a,(i,j)}^{\infty}), 0) \nn \\
    \bar{\eta}_{a,(i,j)}^{\star} &\triangleq \max( \frac{\mu_{(i,j)}^{\infty}}{2} \mbox{sign} (\Delta T_{a,(i,j)}^{\infty}), 0)
\end{align}
and the value of the remaining variables in the $\mc{S}_c$ is equal to the values of the corresponding variables at the solution for the equations in $\mc{S}_d$. Similar arguments can be made for the case with arbitrary number of regions and tie-lines.

 
\ifCLASSOPTIONcaptionsoff
https://www.overleaf.com/project/60341c595786c7a81a9cb990  \newpage
\fi

\bibliographystyle{IEEEtran}
{\bibliography{main}}

\end{document}